\begin{document}

\title{A Simple Soundness Proof for Dependent Object Types}         


\author{Marianna Rapoport}
\email{mrapoport@uwaterloo.ca}

\author{Ifaz Kabir}
\email{ikabir@uwaterloo.ca}

\author{Paul He}
\email{paul.he@uwaterloo.ca}

\author{Ond\v rej Lhot\'ak}
\email{olhotak@uwaterloo.ca}

\affiliation{
  \institution{University of Waterloo}
}


\begin{abstract}
Dependent Object Types (DOT) is intended to be a core calculus for
modelling Scala. Its distinguishing feature is abstract type members,
fields in objects that hold types rather than values. Proving soundness
of DOT has been surprisingly challenging, and existing proofs are
complicated, and reason about multiple concepts at the same time (e.g.
types, values, evaluation). To serve as a core calculus for Scala,
DOT should be easy to experiment with and extend, and therefore its
soundness proof needs to be easy to modify.

This paper presents a simple and modular proof strategy for reasoning in
DOT. The strategy separates reasoning about types from other concerns.
It is centred around a theorem that connects the full DOT type system
to a restricted variant in which the challenges and paradoxes caused by
abstract type members are eliminated. Almost all reasoning in the proof
is done in the intuitive world of this restricted type system. Once
we have the necessary results about types, we observe that the other
aspects of DOT are mostly standard and can be incorporated into a
soundness proof using familiar techniques known from other calculi.

Our paper comes with a machine-verified version of the proof in Coq.

\end{abstract}

\maketitle

\section{Introduction}

2016 was an exciting year for those who desire a formalism to understand
and reason about the unique features of Scala's type system. Mechanized soundness
results were published for the Dependent Object
Types~(DOT) calculus and other related calculi~\citep{wadlerfest,oopsla16,popl17}.
These proofs were the culmination
of an elusive search that spanned more than ten years. The chief
subtleties and paradoxes inherent in DOT and the Scala type system,
which made the proof so challenging, were documented along the
way~\citep{fool12,oopsla14}.

Since the DOT calculus exhibits such subtle and counterintuitive behaviour,
and since the proofs are the result of such a long effort,
it is to be expected that the proofs must be complicated.
The calculus is dependently typed, so it is not surprising that
the lemmas that make up the proofs reason about tricky relationships
between types and values. In some contexts, the type system admits
typings that seem just plain wrong, and give no hope for soundness,
so it seems necessary to have lemmas that reason simultaneously about
the intricate properties of values, types, and the environments that they
inhabit.

A core calculus needs to be easy to extend.
Some extensions of DOT are necessary even just to model essential Scala
features. As a prominent example, types in Scala may depend on paths
$x.a_1.\narrowcdots.a_n.A$ (where $x$ is a variable, $a_i$ are fields, and $A$ is a type member),
but types in the existing DOT calculi can depend only directly on variables ($x.A$).
Path-dependent types are needed to model essential features such as classes
and traits (as members nested in objects and packages) and the famous
cake pattern~\citep{cake}. Another important Scala
feature to be studied in DOT are implicit parameters. Moreover, language
modifications and extensions are the raison d'\^etre of a core calculus.
DOT enables designers to experiment with exciting new features that can
be added to Scala, to tweak them and reason about their properties
before attempting to integrate them in the compiler with the complexity of the
full Scala language.

The complexity of the proof is a hindrance to such extension and experimentation.
Over the past ten years, DOT has been designed and re-designed to be just right,
so that the brilliant lemmas that ensure its soundness hold and can be proven.
When the DOT calculus is disrupted by a modification, it is difficult to predict
which parts of the proof will be affected.
Experimenting with modifications to DOT is difficult because each tweak requires
many lemmas to be re-proven.

Our goal in this paper is a soundness proof that is simpler, more modular,
and more intuitive.
We aim to separate the
concepts of types, values, and operational semantics, and to reason about one concept
at a time.
Then, if a language extension modifies only one concept, such as typing,
the necessary changes are localized to the parts of the proof that deal with types.
We also aim to isolate
most of the reasoning in a simpler system that is immune to the paradox of
bad bounds, the key challenge 
that plagued the long search for a soundness proof.
In this system, our reasoning can rely on intuitive notions from familiar
object calculi without dependent object types~\citep{DBLP:books/daglib/0084624,tapl}. The results of this
reasoning are lifted to the full DOT type system by a single, simple theorem.

The main focus of our proof is on types.
Dependent object types are the one feature that distinguishes DOT,
so we aim to decouple that one feature, which mainly
affects the static type system, from other concerns.
We focus on proving the properties that one expects of types,
and deliberately keep the proof independent of other aspects,
such as operational semantics and runtime values,
which are similar in DOT as in other object calculi.
Of course, a soundness proof must eventually speak about execution
and values, but once we have the necessary theory to reason
about types, these other concerns can be handled separately, at
the end of the proof, using standard proof techniques.
Our final soundness theorem is stated for the small-step operational
semantics given by~\citet{wadlerfest}, but that is only the final conclusion;
the theory that we develop about dependent object types
would be equally applicable in a proof for a big-step operational semantics.

In a sense, this paper moves in the opposite direction compared to other recent
work related to DOT: this paper aims for a simpler proof of one specific calculus,
while other work generalizes DOT with features from other calculi. The most significant addition in \citet{oopsla16} is subtyping between recursive types,
which requires sophisticated proof techniques and induction schemes,
but is not needed to model Scala. \citet{popl17} focuses on a family of calculi
with some features similar to those in DOT, and on general proof techniques applicable
to the whole family.
While it
is useful to generalize and compare DOT to other calculi, that is not the topic
of this paper. This paper focuses inwards, on DOT itself, on only those features
of DOT that are necessary for modelling Scala, with an aim to make the soundness
proof of those specific features as simple and modular as possible.

The power of DOT is also its curse. DOT empowers a program to define a
domain-specific type system with a custom subtyping lattice inside the existing Scala
type system. This power has been
used to encode in plain Scala expressive type systems that would otherwise require new languages
to be designed.
But this power also enables typing contexts that make no
sense, in which types cannot be trusted and thus become meaningless. For example, a program could define typing contexts in which an object, which is not a function, nevertheless has a function type. 
Since such ``crazy'' contexts are possible,
a soundness proof needs to consider them (but prove that they are harmless during
execution).

Besides the general pursuit of modularity, the simplicity of our new proof depends on two main ingredients.

The first ingredient is \emph{\good types} and \emph{\good typing contexts},
which we will define in~\Cref{sec:good}.
The essential property of an inert type is that if all variables
have \good types, then no unexpected subtyping relationships
are possible, so types can be trusted, and none of the paradoxes are possible. We express this
property more formally in~\Cref{sec:good}.
An important part of the soundness
proof is to ensure that a term cannot evaluate until the types
of all its free variables have been narrowed to \good types.

We define inertness as a concise, easily testable syntactic property of a type. 
The definition consists of only two
non-recursive inference rules, so it can be easily inverted
when it occurs in a proof. 
By contrast, existing DOT proofs achieve similar goals using
properties that characterize types by the existence of values
with specific relationships to those types.
The benefit of our inertness property is that it
involves only a type, not any values, and
it is defined directly, not via existential quantification of some
corresponding value.

The second ingredient is tight typing, a small restriction
of the DOT typing rules with major consequences, which we will discuss
in~\Cref{sec:tight}.
We did not invent tight typing; it appears as
a technical definition in the proof of~\citet{wadlerfest}. Our contribution
is to identify and demonstrate just how useful and important tight
typing is to a simple proof. \citet{wadlerfest} use tight typing
in a collection of technical lemmas mixed with reasoning about other
concerns, such as general typing (the full typing rules of DOT)
and correspondences between values and types.
In our proof, however, tight typing takes centre stage;
it is the main actor that enables intuition and simplicity.

Tight typing neutralizes the two DOT type rules that enable a program
to define custom subtyping relationships.
Tight typing immunizes the calculus:
even if a typing context contains a type that is not inert,
tight typing prevents it from doing any harm.
The paradoxes that make it challenging to work with DOT disappear
under tight typing.
Without those two typing rules, the calculus behaves very differently, like
object calculi without dependent object types,
and our reasoning can rely on familiar properties that we are used to from these calculi.

Of course, DOT with tight typing is not at all the real DOT: it lacks
the power to create customized type systems, and it is uninteresting;
it is just another calculus with predictable behaviour. \Cref{thm:one}
in \Cref{sec:tight} bridges the gap by showing that in \good contexts,
tight typing has all the power of general typing. Therefore, all the
reasoning that we do in the intuitive environment of tight typing
applies to the full power of DOT. Even our proof of \Cref{thm:one}
itself reasons entirely with tight typing, without having to deal
with the paradoxes of general DOT typing, and without having to
reason about relationships between types and values.

Combining these two ingredients, we contribute a unified general recipe
that can be used whenever a proof about DOT needs to deduce information about
a term from its type. Many of our lemmas follow this recipe.
The first step of the recipe, which should be the first step
of any reasoning about types in DOT, is to drop down from general
typing to tight typing using \Cref{thm:one}. The purpose of the
remaining steps is to make inductive reasoning as easy and systematic
as possible.

\paragraph{Contributions}
This paper presents a simplified and extensible soundness proof for the DOT calculus~\citep{wadlerfest}. We contribute the following:
\begin{itemize}
	\item A \textit{modular} proof that
	          reasons about types, values, and operational semantics separately.
	\item  The concept of \textit{inert} typing contexts,
            a syntactic characterization of
            contexts that rule out any non-sensical subtyping that could be introduced
            by abstract type members.
	\item A simple \textit{proof recipe} for deducing properties of terms from their types
            in full DOT while reasoning only in a restricted, intuitive environment free from the
            paradoxes caused by abstract type members.
			 Multiple lemmas follow the same recipe, and following the recipe can facilitate the development of new lemmas needed in
			 future extensions for DOT.
    \item A \textit{Coq formalization} of the DOT soundness proof
             presented in this paper.\footnote{
             The Coq proof can be found at \textsf{\href{https://git.io/simple-dot-proof}{https://git.io/simple-dot-proof}}}
\end{itemize}

The rest of this paper is organized as follows.
Section~\ref{sec:background} describes the DOT type system and explains the
problem of bad bounds, which is responsible for the complexity in DOT soundness proofs.
Section~\ref{sec:proof} presents a detailed description of the simplified DOT soundness proof
introduced in this paper.
Section~\ref{sec:discussion} explains how to extend the proof with new DOT features.
The section also continues the discussion of the bad-bounds problem.
Section~\ref{sec:related} examines related work. 
We finish with concluding remarks in Section~\ref{sec:conclusion}.

\section{Background}\label{sec:background}
The proof in this paper proves type soundness of the variant of the DOT calculus defined by~\citet{wadlerfest}.

\subsection{DOT Syntax}
\begin{wide-rules}
  \begin{multicols}{2}
    \begin{flalign}
  x,\,y,\,z            \tag*{\textbf{Variable}}\\
  a,\,b,\,c            \tag*{\textbf{Term member}}\\
  A,\,B,\,C            \tag*{\textbf{Type member}}\\
 s,\,t,\,u&\Coloneqq  \tag*{\textbf{Term}}\\
   &x                  \tag*{variable}\\
   &v                  \tag*{value}\\
   &x.a                \tag*{selection}\\
   &x\,y               \tag*{application}\\
   &\tLet x t u        \tag*{let binding}\\
  v&\Coloneqq          \tag*{\textbf{Value}}\\
   &\tLambda x T t     \tag*{lambda}\\
   &\tNew x T d        \tag*{object}\\
  d&\Coloneqq          \tag*{\textbf{Definition}}\\
   &\set{a=t}          \tag*{field definition}\\
   &\set{A=T}          \tag*{type definition}\\
   &\tAnd d {d'}       \tag*{aggregate definition}\\
  S,\,T,\,U&\Coloneqq  \tag*{\textbf{Type}}\\
    &\tForall x S T    \tag*{dependent function}\\
    &\tRec x T         \tag*{recursive type}\\
    &\tFldDec a T      \tag*{field declaration}\\
    &\tTypeDec A S T   \tag*{type declaration}\\
    &x.A               \tag*{type projection}\\
    &\tAnd S T         \tag*{intersection}\\
    &\top              \tag*{top type}\\
    &\bot              \tag*{bottom type}
   \end{flalign}

  \end{multicols}
  \caption{Abstract syntax of DOT~\citep{wadlerfest}}  
  \label{fig:synt}
\end{wide-rules}

We begin by describing the abstract syntax of the calculus, shown in Figure~\ref{fig:synt}.
The calculus defines two forms of \textit{values}:
\begin{itemize}
  \item A \textit{lambda abstraction} $\tLambda x T t$ is a function with parameter $x$ of type $T$ and a body consisting of the term $t$.
  \item An \textit{object} of type $T$ with definitions $d$
        is denoted as $\tNew x T d$.
        The body of the object consists of the definitions $d$, which are a collection 
        of field and type member definitions, connected through the intersection operator.
        The field definition $\set{a=t}$ assigns a term $t$ to a field labeled $a$,
        and the type definition $\set{A=U}$ defines the type label $A$ as an alias for the type $U$.
        The object also explicitly declares a recursive \textit{self},
        or ``this'', variable $x$.
        As a result, both $T$ and $d$ can refer to $x$.
\end{itemize}

A DOT \textit{term} is a variable $x$, value $v$, field selection $x.a$, function application $x\,y$, 
or let binding $\tLet x t u$.
To keep the syntax simple, the DOT calculus uses administrative normal form (ANF); as a result,
field selection and function application can involve only variables, not arbitrary terms.

A DOT \textit{type} can be one of the following:
\begin{itemize}
  \item A \textit{dependent function} type $\tForall x S T$ is the type of a function
      with a parameter $x$ of type $S$, and with the return type $T$, which can refer to the parameter $x$.
  \item A \textit{recursive type} $\tRec x T$ declares an object type $T$
        which can refer to its self-variable $x$.
  \item A \textit{field declaration} $\tFldDec a T$ states that the field labeled $a$ has type $T$.
  \item A \textit{type declaration} $\tTypeDec A  S T$ specifies that an abstract type member $A$ is a subtype of $T$ and a supertype of~$S$.
  \item A \textit{type projection} $x.A$ is the type assigned to the type member labelled $A$ of the object $x$ 
        (ANF allows type projection only on variables).
  \item An \textit{intersection} type $\tAnd S T$ is the most general subtype of both $S$ and $T$.
  \item The \textit{bottom} type $\bot$ and the \textit{top} type $\top$ correspond to the bottom and top of the subtyping lattice, and are
        analogous to Scala's \texttt{Nothing} and \texttt{Any}.
\end{itemize}

Examples of DOT programs and their Scala equivalents can be found in \citet{wadlerfest}.

\subsection{DOT Typing Rules}

For lack of space, the DOT typing rules (which we will call the ``general'' typing relation throughout
this paper)
are presented in the appendix in \Cref{fig:typing}. However, the \textit{tight} typing rules, which will be
discussed in \Cref{sec:tight}, and are shown in \Cref{fig:typ-tight,fig:def-rules},
are very similar. The general typing rules can be read from the tight typing rules
by ignoring all occurrences of the symbol $\#$, and by replacing the two highlighted rules
\rn{$<:$-Sel-\#}
and
\rn{Sel-$<:$-\#}
with the following ones:
\begin{multicols}{2}
\infrule[$<:$-Sel]
  {\typDft x {\tTypeDec A S T}}
  {\subDft S {x.A}}

\infrule[Sel-$<:$]
  {\typDft x {\tTypeDec A S T}}
  {\subDft {x.A} T}
\end{multicols}

\begin{wide-rules}

\textbf{Tight term typing}
\begin{multicols}{2}

\infrule[Var-\#]
  {\G(x)=T}
  {\typTightDft x T}

\infrule[All-I-\#]
  {\typ {\extendG x T} t U
    \andalso
    x\notin\fv T}
  {\typTightDft{\tLambda x T t}{\tForall x T U}}

\infrule[All-E-\#]
  {\typTightDft x {\tForall z S T}
    \andalso
    \typTightDft y S}
  {\typTightDft {x\, y} {\tSubst z y T}}

\infrule[\{\}-I-\#]
  {\typd {\extendG x T} d T}
  {\typTightDft {\tNew x T d} {\tRec x T}}
  
\infrule[\{\}-E-\#]
  {\typTightDft x {\tFldDec a T}}
  {\typTightDft {x.a} T}

\infrule[Let-\#]
  {\typTightDft t T
      \\
    \typ {\extendG x T} u U
    \andalso
    x\notin\fv U}
  {\typTightDft {\tLet x t u} U}

\infrule[Rec-I-\#]
  {\typTightDft x T}
  {\typTightDft x {\tRec x T}}

\infrule[Rec-E-\#]
  {\typTightDft x {\tRec z T}}
  {\typTightDft x {\tSubst z x T}}

\infrule[And-I-\#]
  {\typTightDft x T
    \andalso
    \typTightDft x U}
  {\typTightDft x {\tAnd T U}}

\infrule[Sub-\#]
  {\typTightDft t T
    \andalso
    \subTightDft T U}
  {\typTightDft t U}

\end{multicols}

\textbf{Tight subtyping}

\begin{multicols}{3}
    
\infax[Top-\#]
  {\subTightDft T \top}

\infax[Bot-\#]
  {\subTightDft \bot T}

\infax[Refl-\#]
  {\subTightDft T T}

\infrule[Trans-\#]
  {\subTightDft S T
    \andalso
    \subTightDft T U}
  {\subTightDft S U}

\infax[And$_1$-$<:$-\#]
  {\subTightDft {\tAnd T U} T}

\infax[And$_2$-$<:$-\#]
  {\subTightDft {\tAnd T U} U}

\infrule[$<:$-And-\#]
  {\subTightDft S T
    \andalso
    \subTightDft S U}
  {\subTightDft S {\tAnd T U}}

\infrule[Fld-$<:$-Fld-\#]
  {\subTightDft T U}
  {\subTightDft {\tFldDec a T} {\tFldDec a U}}

\newruletrue
\infrule[$<:$-Sel-\#]
  {\typPrecDft x {\tTypeDec A T T}}
  {\subTightDft T {x.A}}

\infrule[Sel-$<:$-\#]
  {\typPrecDft x {\tTypeDec A T T}}
  {\subTightDft {x.A} T}
\newrulefalse

\end{multicols}
\begin{multicols}{2}

\infrule[Typ-$<:$-Typ-\#]
  {\subTightDft {S_2} {S_1}
    \\
    \subTightDft {T_1} {T_2}}
  {\subTightDft {\tTypeDec A {S_1} {T_1}} {\tTypeDec A {S_2} {T_2}}}

\infrule[All-$<:$-All-\#]
  {\subTightDft {S_2} {S_1}
    \\
    \sub {\extendG x {S_2}} {T_1} {T_2}}
  {\subTightDft {\tForall x {S_1} {T_1}} {\tForall x {S_2} {T_2}}}
\end{multicols}

\caption{Tight Typing Rules \citep{wadlerfest}}
  \label{fig:typ-tight}

\end{wide-rules}

\begin{wide-rules}
  
\begin{multicols}{2}
  
\infrule[Def-Trm]
  {\typDft t U}
  {\typdDft {\set{a=t}} {\tFldDec a U}}

\infax[Def-Typ]
  {\typdDft {\set{A=T}} {\tTypeDec A T T}}  

\infrule[AndDef-I]
  {\typdDft {d_1} {T_1}
    \andalso
    \typdDft {d_1} {T_2}
    \\
    \dom{d_1},\,\dom{d_2}\text{ disjoint}}
  {\typdDft {\tAnd {d_1} {d_2}} {\tAnd {T_1} {T_2}}}

\end{multicols}

  \caption{Definition Typing Rules \citep{wadlerfest}}
  \label{fig:def-rules}

\end{wide-rules}

Most of the type rules are unsurprising.
The rules \rn{All-I} and \rn{\{\}-I} give types to values.
An object $\tNew x T d$ has the recursive type $\tRec x T$, where the
types $T$ must match, and $T$ must summarize the definitions $d$
following the definition typing rules in \Cref{fig:def-rules}.
Note that due to \rn{Def-Typ},
each of the type declarations in an object must have
equal lower and upper bounds (i.e. an object $\tNew x {\tTypeDec A S U} {\set{A=T}}$
is only well-typed if $S=U=T$).
The rules \rn{Var}, \rn{All-E}, \rn{\{\}-E}, \rn{Let} give types to the other four
forms of terms, and are unsurprising.
The recursion introduction (\rn{Rec-I}),
recursion elimination (\rn{Rec-E}), and intersection introduction
(\rn{And-I}) rules apply only to variables, but the subsumption rule
(\rn{Sub}) applies to all terms.
%
%
%
%
%
The subtyping rules establish the top and bottom of the subtyping lattice
(\rn{Top}, \rn{Bot}), define reflexivity and transitivity (\rn{Refl}, \rn{Trans}),
and basic subtyping rules for intersection types (\rn{And$_1$-$<:$}, \rn{And$_2$-$<:$},
\rn{$<:$-And}).
As is commonplace, dependent functions are covariant in the return type
and contravariant in the parameter type (\rn{All-$<:$-All}).
Field typing is covariant by the rule \rn{Fld-$<:$}, whereas type member declarations
are contravariant in the lower bound and covariant in the upper bound 
via \rn{Typ-$<:$}.
The most interesting rules that distinguish DOT are
\rn{$<:$-Sel} and \rn{Sel-$<:$} above, which introduce an object-dependent type
$x.A$ and define subtyping between it and its bounds.
As we will see,
these rules are responsible for much of the complexity of the safety proof.

\subsection{Bad Bounds}
\label{sec:bad-bounds}

The type selection subtyping rules \rn{$<:$-Sel} and \rn{Sel-$<:$} enable users to define a 
type system with a custom subtyping lattice. If a program defines a function
$\tLambda x {\tTypeDec A S U} t$, then $t$ is typed in a context in which $S$ is
considered a subtype
of $U$, because $S <: x.A <: U$.
The soundness proof must ensure that such a user-defined subtyping
lattice do not cause any harm, i.e., cannot cause a violation of
type soundness of the overall calculus.

\newcommand{\egrectype}{\tFldDec a \top}
\newcommand{\egfuntype}{\tForall z \top \top}
Let $S$ be the object type $\egrectype$ and $U$ be the function type $\egfuntype$.
Then the following is a valid and well-typed DOT term:
$$
\tLambda x{\tTypeDec{A}{S}{U}}{
    \tLet y {\tNew y{S}{\set{a = y.a}}} {
        y\ y
    }
}
$$
How is this possible? The inner term $y\ y$ is a function application
applying $y$ to itself, but $y$ is bound by the let to an object, not
a function. How can $y$ appear in a function application when it is
not a function? This is possible because $y$ has the object type
$S$, and in the body of the lambda, we have the subtyping chain
$S <: x.A <: U$. The declaration of the lambda asserts that $x.A$
is a supertype of $S$ and a subtype of $U$, and therefore introduces
the new custom subtyping relationship $S <: U$. Inside the body
of the lambda, the object type $S$ is a subtype of the function
type $U$, so since the object $y$ has type $S$, it also has the function
type $U$. The function application of object $y$ to itself is therefore
well-typed in this context.

This is crazy, the reader may be thinking. Indeed, in an environment
in which subtyping can be arbitrarily redefined, types cannot be
trusted. In particular, we cannot conclude from the fact that $y$ has
the function type $S$ that it is indeed a function; actually, it is an
object. The seemingly obvious fix is to require $S$ to be a subtype
of $U$ when the parameter $x$ of the lambda is declared to have type
$\tTypeDec{A}{S}{U}$. But as we will discuss in \Cref{sec:good-bounds},
this seemingly obvious fix does not work, and the struggle to try
to make it work has caused much of the difficulty in the ten-year
struggle for a DOT soundness proof.

How can DOT be sound then, when it is so crazy? After all, the function
application $y\ y$ is well-typed but its evaluation gets stuck, because $y$ is
not a function, so how can DOT be sound? The key is that
the DOT semantics is call-by-value. In order to invoke the
body of the lambda, one must provide an argument value to
pass for the parameter $x$. This value must contain a type
assigned to $A$ that is both a supertype of $U$ and a subtype
of $S$. If no such type exists, then no such argument value
can exist, so the lambda cannot be called, so its body
containing the crazy application $y\ y$ cannot ever be
executed. Therefore, this term is not a counterexample to
the soundness of the DOT type system.

Why should DOT have such a strange feature? The ability
to define a custom subtyping lattice turns out to be very
useful. For example, we can define the term:
$$
\tLambda x{\tTypeDec{A}{\bot}{\top}\wedge\tTypeDec{B}{x.A}{x.C}\wedge\tTypeDec{C}{\bot}{\top}} t
$$
In the body $t$ of this lambda, we can make use of unspecified opaque
types $A$, $B$, and $C$, making use of only the condition that $A
<: B <: C$. We can use this feature to define arbitrary type systems
within the language. For example,~\citet{DBLP:conf/ecoop/ScalasY16}
have implemented session types, a feature that usually requires
a custom-designed language, inside plain Scala. As another
example,~\citet{DBLP:conf/oopsla/OsvaldEWAR16} used this ability
to define a lattice of lifetimes within the Scala type system
for categorizing values that
cannot outlive different stack frames. 

To reconcile a custom subtyping lattice with a sound language, we only
need to force the programmer to provide evidence that the custom lattice does not
violate any familiar assumptions (e.g., it does not make object types
subtypes of function types). This evidence takes the form of an argument
value that must be passed to the lambda before the body that uses the
custom type lattice can be allowed to execute. This value must be an
object that provides existing types that satisfy the specified custom subtyping
constraints. In our example, this is easy: it suffices to pass the same type,
such as $\top$, for all three type parameters, since $\top <: \top <: \top$.
However, the types are opaque: when checking the body of the lambda, the
type checker cannot use the fact that $A = B = C = \top$; the body must
type-check even under only the assumptions that $A <: B <: C$.

Since DOT programs can exhibit unexpected subtyping lattices in some
contexts, and since this is unavoidable, an essential feature of a
soundness proof is to clearly distinguish contexts in which types
can be trusted, because any custom subtyping relationships have
been justified by actual type arguments, from contexts in which
types cannot be trusted, because they could have been derived from
arbitrary unjustified custom subtyping relationships. In \Cref{sec:good},
we will formally define this property that types can be trusted,
and define a simple syntactic characterization of \good typing contexts
that guarantee this property. In earlier DOT soundness proofs,
the trusted types property was not precisely defined,
and typing contexts in which there are no bad bounds were
defined more indirectly, not in terms of the types themselves,
but in terms of the existence of values having those types.

\section{Proof}\label{sec:proof}

\subsection{Overview}
\label{sec:proof-overview}

We will first outline the general recipe that we use to reason
throughout the proof about the meaning of a type.
The details of each step will be discussed
in the following subsections. We
present the overview on an example proof of \Cref{lemma:cf-field}, which
will be introduced in \Cref{sec:values}, but the specific example is
unimportant; most of the reasoning throughout the proof follows the
same steps, through the same typing relations, in the same order, using the
same reasoning techniques.

Usually, we know that some term has some type
(e.g. $\typDft x {\tFldDec a T}$), and we seek to
interpret what the type tells us about the term,
and to determine how the type of the term was derived.
In this example, we seek more detailed information
about $x$, for example that the typing context
$\G$ assigns it an object type
$\G(x)=\tHas x {\tFldDec a {T'}}$,
or the shape of the value that it will hold at run time
(e.g. an object 
$\tValHas x {\tFldDec a {T'}}{\set{a=t'}}$).

Each such derivation follows the same sequence of steps
(although sometimes only a subsequence of the steps
is necessary):

\begin{mathpar}
    \inferrule*[right=Induction on $\vdash_!$]{
    \inferrule*[right=Induction on $\vdash_{\#\#}$]{
    \inferrule*[right=\fullref{lemma:invertible}]{
        \inferrule*[Right=\fullref{thm:one}]{
            \inert \G \\ \typDft x {\tFldDec a T}
}{
    \inert \G \\ \typTightDft x {\tFldDec a T}
}}{
\inert \G \\ \tptDft x {\tFldDec a T}
}}{
\inert \G \\ \typPrecDft x {\tFldDec a {T'}} \\ \subDft{T'}{T}
}}{
\inert \G \\ \G(x) = {\tHas x {\tFldDec a {T'}}} \\ \subDft{T'}{T}
}
\end{mathpar}

Although there are four steps, each individual step is quite simple.
More importantly, each step is modular, independent of the other steps,
and the proof techniques at each step are either directly reusable
(theorems) or easily adaptable (induction)
to proofs of properties other than this specific lemma.

The derivation starts with general typing ($\typDft x {\tFldDec a T}$), the
typing relation of the DOT calculus. The key property that makes reasoning
possible is that the typing context $\G$ is inert.
Inert contexts will be defined
in \Cref{sec:good}. Inertness ensures that customized subtyping in the program
does not introduce unexpected subtyping relationships. If the context were not inert,
any type could have been customized to have arbitrary subtypes
and be inhabited by arbitrary terms, so it would be impossible to draw
any conclusions about a term from its type.

Knowing that the typing context is inert, we apply \fullref{thm:one} to get a tight typing
($\typTightDft x {\tFldDec a T}$),
which will be discussed in \Cref{sec:tight}. A tight typing is immune
to any unexpected subtyping relationships that the program may have defined,
so our reasoning can now
rely
on familiar intuitions about what types ought to mean about their terms.

However, the tight typing rules are not amenable to inductive proofs.
\fullref{lemma:invertible} gives invertible typing ($\tptDft x {\tFldDec a T}$),
which is specifically designed to make inductive reasoning as easy as possible.
Invertible typing will be discussed in \Cref{sec:inversion}.

By induction on invertible typing, we obtain a property of all of the precise types
$\typPrecDft x {\tFldDec a {T'}}$
that could have caused $x$ to have the general type ${\tFldDec a T}$.
Informally, the precise typing means that the type $\G(x)$ given to $x$
by the typing context is an object type containing a field $a$ of
type $T'$.
We will present precise typing in \Cref{sec:tight}.
Precise typing is also amenable to straightforward induction proofs,
so we can use one to obtain $\G(x)$.


\subsection{Inert Typing Contexts}
\label{sec:good}

Recall the function
$\tLambda x {\tTypeDec A S U} t$
that we discussed in \Cref{sec:bad-bounds}.
If the function appears in a context $\G$,
its body is type checked in an extended context
$\extendG x {\tTypeDec A S U}$. The extended context
adds a new subtyping relationship
$\sub{\extendG x {\tTypeDec A S U}} S U$ that might not have
held in the original context $\G$. In particular,
the extended context could introduce a subtyping relationship
that does not make sense, such as
$\tForall x S T <: \tRec x U$, or $\top <: \bot$.
To control such unpredictable contexts, we define
the notion of inert typing contexts and inert types.

\begin{definition}
    A typing context $\Gamma$ is inert if the type $\Gamma(x)$ that it assigns to each variable $x$ is inert.
\end{definition}

\begin{definition}
    A type $U$ is inert if
    \begin{itemize}
        \item U is a dependent function type $\tForall x S T$, or
        \item U is a recursive type $\tRec x T$, where $T$ is an
            intersection of field declarations $\tFldDec a S$
            and tight type declarations $\tTypeDec A S S$,
            and the type labels $A$ of the tight type declarations are distinct.
            A type declaration $\tTypeDec A S U$ is tight
            if its bounds $S$ and $U$ are the same.
    \end{itemize}
\end{definition}

An inert typing context has the following useful property.

\begin{property}[Inert Context Guarantee]
    \namedlabel{prop:good-ctxt}{Inert Context Guarantee}
    Let $\G$ be any inert typing context, $t$ be a closed term and $U$ be a closed type.
    If $\typDft t U$,
    then $\typ \emptyset t U$.
\end{property}

The significance of this property is that in an inert typing context, a term
$t$ does not have any ``unexpected'' types that it would not have in
an empty typing context. For example, we can be sure that in an inert typing
context, a function value will not have an object (recursive) type,
and an object will not have a function type. Though we do not directly
apply the property in the proof, it is useful
for intuitive reasoning about typing and subtyping in inert typing contexts.

Every value has an inert type (as long as the value is well formed, i.e., as long as
it has any type at all).
This is because the two base typing rules for values,
(\rn{All-I}) and (\rn{\{\}-I}), and the definition typing rules
that they depend on, always assign an inert type to the value.
The converse is not true: not every inert type is inhabited by a value.
For example, we cannot construct a value of type $\tLambda x \top \bot$.

Returning to the example, suppose now that the function is invoked
with some value $v$ bound to a variable $y$:
$\tLet y v {(\tLambda x {\tTypeDec A S U} t)\ y}$.
Recall that the body $t$ is typed with the assumption that $S <: U$.
Type checking the overall term ensures that the argument $y$ provides
\emph{evidence} for that assumption. Specifically, the value $v$
has an inert type, so $y$ has an inert type. The typing rule
for function application requires subtyping between the
argument and parameter types, so the type of $y$ must have a
member $\tTypeDec A T T$ with $S <: T$ and $T <: U$.
(The bounds $T$ of the type member must be tight because the
type is inert.) The type $T$ that $y$ provides is evidence
that justifies the assumption $S <: U$ under which the
body $t$ of the function was type checked.
During execution, when the function is called,
all occurrences of $x$ in the body $t$ will be
replaced by $y$ before evaluation of the body begins.
In general, the semantics ensures that before it
begins evaluating a term (such as $t$), the term
has a type in a context in which all non-inert types
(such as the type of $x$) have been narrowed to inert
types (such as the type of $y$).

\subsection{Tight Typing}
\label{sec:tight}

\begin{wide-rules}

\textbf{Precise variable typing}
\begin{multicols}{4}

\infrule[Var!]
  {\G(x)=T}
  {\typPrecDft x T}

\infrule[Rec-E!]
  {\typPrecDft x {\tRec z T}}
  {\typPrecDft x {\tSubst z x T}}

\infrule[And$_1$-E!]
  {\typPrecDft x {\tAnd T U}}
  {\typPrecDft x T}

\infrule[And$_2$-E!]
  {\typPrecDft x {\tAnd T U}}
  {\typPrecDft x U}

\end{multicols}

\textbf{Precise value typing}
\begin{multicols}{2}  

\infrule[All-I!]
  {\typ {\extendG x T} t U
    \andalso
    x\notin\fv T}
  {\typPrecDft{\tLambda x T t}{\tForall x T U}}

\infrule[\{\}-I!]
  {\typd {\extendG x T} d T}
  {\typPrecDft {\tNew x T d} {\tRec x T}}
\end{multicols}
    
\caption{Precise Typing Rules \citep{wadlerfest}}
\label{fig:typ-prec}

\end{wide-rules}

Although inert contexts provide the assurance of \fullref{prop:good-ctxt},
in our proofs, we often need to reason even in contexts that are not
inert. Moreover, even when we know that a context is inert, it would
be difficult to express the important consequences of the inert context
in every proof that deals with the general DOT typing and subtyping
rules.

Tight typing \citep{wadlerfest} is a slight restriction of general typing that
can bridge the gap between the unpredictability of the
general DOT typing rules in arbitrary typing contexts and the
predictable assurances of \Cref{prop:good-ctxt} in inert typing contexts.
The tight typing rules are presented in \Cref{fig:typ-tight}.
They are almost the same as the general DOT typing rules, except
that the
\rn{$<:$-Sel-\#} and \rn{Sel-$<:$-\#} rules have the restricted
premise $\typPrecDft x {\tTypeDec A T T}$, so they can
be applied only when the bounds $T$ of the type member $A$
are tight. Precise typing, denoted $\vdash_!$, is defined
in \Cref{fig:typ-prec}. The precise type of a variable $x$
is the type $\G(x)$ given to it by the typing context $\G$,
possibly decomposed using the elimination rules,
so that if $\G(x)$ is an object type such as
$\tHas x {\tTypeDec A T T}$, then $x$ also has just the type
member $\tTypeDec A T T$ as a precise type.
For values, precise typing applies only the
base case rules \rn{All-I} and \rn{\{\}-I} from general typing.
In premises of rules that extend the typing context
(\rn{All-I-\#}, \rn{Let-\#}, \rn{\{\}-I-\#}), tight typing reverts
to general typing in the extended context.

We observe two very useful properties of tight typing that
together combine to make it especially convenient for reasoning 
about DOT typing.
The first property is that tight typing
extends the benefits of \fullref{prop:good-ctxt}, to \emph{all}
typing contexts, not only inert ones:
\begin{property}[Tight Typing Guarantee]
    \namedlabel{prop:tight-ctxt}{Tight Typing Guarantee}
    Let $\G$ be any typing context, $t$ be a closed term and $U$ be a closed type.
    If $\typTightDft t U$,
    then $\typTight \emptyset t U$ and $\typ \emptyset t U$.
\end{property}
The general typing rules that enable DOT programs to define new user-defined
subtyping relationships,
\rn{$<:$-Sel} and \rn{Sel-$<:$}, are 
restricted in tight typing to 
\rn{$<:$-Sel-\#} and \rn{Sel-$<:$-\#}, which
allow only to give an alias to an existing type, but not to introduce
new subtyping between existing types.

\Cref{prop:tight-ctxt} makes reasoning in tight typing easy:
we never have to worry about unexpected custom subtyping relationships
being introduced by the program, and we do not need to reason about
whether we are in an inert typing context, because tight typing gives
the guarantee in all contexts.

Although tight typing satisfies the desirable intuitive \Cref{prop:tight-ctxt},
it is not DOT. In particular, tight typing does not, in general, enable
a program to use a custom-defined subtyping lattice that is the key
feature of dependent object types.
We would like the best of both worlds: to allow DOT programs to enjoy
the full power of general typing, yet to reason about our proofs
with the intuitive tight typing.
For this, we need the second property of tight typing.

The second important property of tight typing is that in an inert typing context,
tight typing is equivalent to general DOT typing:
\begin{theorem}[$\vdash$ to $\vdash_{\#}$]
    \namedlabel{thm:one}{$\vdash$ to $\vdash_{\#}$}
If $\G$ is an inert context, then
$\typDft t T$ implies $\typTightDft t T$, and
$\subDft S U$ implies $\subTightDft S U$.
\end{theorem}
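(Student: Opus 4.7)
The plan is to proceed by mutual induction on the derivations of $\Gamma \vdash t : T$ and $\Gamma \vdash S <: U$. For every general typing and subtyping rule except $<:$-Sel and Sel-$<:$, there is a tight counterpart with identical premises (the only syntactic difference being the $\#$ annotation), so the induction hypotheses translate the sub-derivations directly, and the tight rule applies. In particular, the cases \rn{All-I}, \rn{\{\}-I}, \rn{Let}, and the subtyping rule \rn{All-$<:$-All}, which extend the context in one of their premises, go through because tight typing reverts to general typing in extended contexts, so the IH is not even needed for those premises.

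The only genuinely interesting cases are \rn{$<:$-Sel} and \rn{Sel-$<:$}. Suppose the last rule is \rn{$<:$-Sel}, giving $\Gamma \vdash S <: x.A$ from a premise $\Gamma \vdash x : \tTypeDec A S T$. By the mutual IH, $\typTightDft x {\tTypeDec A S T}$. What is needed to finish is the following auxiliary lemma: \emph{if $\Gamma$ is inert and $\typTightDft x {\tTypeDec A S T}$, then there exists $T'$ with $\typPrecDft x {\tTypeDec A {T'} {T'}}$, $\subTightDft S {T'}$, and $\subTightDft {T'} T$.} Given the lemma, rule \rn{$<:$-Sel-\#} applies to the precise typing to yield $\subTightDft {T'} {x.A}$, and then \rn{Trans-\#} with $\subTightDft S {T'}$ closes the case. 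The \rn{Sel-$<:$} case is symmetric, using \rn{Sel-$<:$-\#} and $\subTightDft {T'} T$.

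The real work is in the auxiliary lemma, and the plan there is to follow the second half of the proof recipe from \Cref{sec:proof-overview}, entering at the tight-typing stage (since \Cref{thm:one} is the very thing being proved here, it must be bypassed). First, apply \fullref{lemma:invertible} to convert the tight typing into an invertible typing $\tptDft x {\tTypeDec A S T}$. Then induct on the invertible typing derivation: the base case descends to precise typing, and inertness of $\Gamma$ pays off precisely here, because any type-member declaration with label $A$ appearing in an inert $\Gamma(x)$ must have tight (equal) bounds, say $\tTypeDec A {T'} {T'}$, so the witness $T'$ is available with $\subTightDft {T'} {T'}$ on both sides. The inductive cases correspond to widening the bounds via \rn{Typ-$<:$-Typ-\#}-style subsumption, which only loosens the lower bound downward and the upper bound upward, preserving the sandwich $\subTightDft S {T'}$ and $\subTightDft {T'} T$.

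The hard part is the auxiliary lemma, specifically verifying that the invertible typing machinery records just enough bookkeeping about intermediate subsumption steps that inductive analysis can always trace back to a precise, tight-bounded declaration in $\Gamma$. Everything else is a routine mutual induction in which rules without Sel are copied across verbatim and the two Sel cases are dispatched by the lemma plus transitivity.
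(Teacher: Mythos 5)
Your proposal is correct and follows essentially the same route as the paper: a mutual induction in which every rule except \rn{$<:$-Sel} and \rn{Sel-$<:$} is translated verbatim, with those two cases discharged by exactly the paper's \fullref{lemma:two}, whose proof in turn reduces to the paper's \fullref{lemma:type-member-inversion} established via \fullref{lemma:invertible} and induction on invertible typing. No gaps; the only difference is presentational, in that you inline the two auxiliary lemmas rather than stating them separately.
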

We delay giving the proof of the theorem until after some discussion.

These two properties motivate and justify our recommendation that
tight typing should be at the core of all reasoning about the
meaning of types in DOT.
Tight typing is predictable, like the type systems of familiar
calculi without dependent object types, yet in an inert typing
context, it has the same power as general DOT typing. Therefore,
every proof with a premise involving general typing and
an inert typing context should immediately apply \fullref{thm:one}
to drop down into the intuitive environment of tight typing
for the rest of the reasoning.

What if we do not have an inert context as a premise, and
therefore cannot apply \Cref{thm:one}?
In that case, we should not reason about the meanings
of types at all. As we saw in \Cref{sec:bad-bounds},
in such a context, a term could be given an arbitrary type
by custom subtyping rules. Therefore, we cannot deduce anything
about a term from its type, and it would be futile to try.


%


In summary, inert contexts, tight typing, and \Cref{thm:one}
that justifies reasoning in tight typing should be the cornerstones
of any reasoning about the meaning of types in the DOT calculus.

How shall we prove \Cref{thm:one}, then?
It is tempting to prove the theorem by trying to compare various
properties of the tight and general typing \emph{relations}, the
closures of the tight and general typing \emph{rules}.
This approach was taken in the proof of \citet{wadlerfest}
for a related theorem (with the same conclusion but different premises).
The typing relations
are very different from each other (general typing is much more powerful),
but the rules that give rise to them are quite similar. It is much easier,
therefore, to instead show that the \emph{rules} are equivalent in an
inert context. The only rules in general typing missing from tight
typing are the \rn{$<:$-Sel} and \rn{Sel-$<:$} rules.
Our goal is therefore to replace these rules
with a lemma:
\begin{lemma}[Sel-$<:$ Replacement]
\namedlabel{lemma:two}{Sel-$<:$ Replacement}
If $\G$ is an inert context, then
if $\typTightDft x {\tTypeDec A S U}$,
then $\subTightDft S {x.A}$ and $\subTightDft {x.A} U$.
\end{lemma}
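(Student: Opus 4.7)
The plan is to factor the tight typing $\typTightDft x {\tTypeDec A S U}$ through a precise type declaration with equal bounds. Concretely, I aim to produce a witness $T$ such that $\typPrecDft x {\tTypeDec A T T}$, $\subTightDft S T$, and $\subTightDft T U$; then the restricted selection rules \rn{$<:$-Sel-\#} and \rn{Sel-$<:$-\#} give $\subTightDft T {x.A}$ and $\subTightDft {x.A} T$, and two applications of \rn{Trans-\#} yield the two required subtypings.

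To produce this witness, I would detour through an intermediate invertible typing relation $\vdash_{\#\#}$ for variables (the relation used in the recipe of \Cref{sec:proof-overview} and established by \fullref{lemma:invertible}). The first step is a soundness lemma from tight to invertible typing: in any context, $\typTightDft x V$ implies $\tptDft x V$, where $\tptDft x V$ unfolds into a precise typing $\typPrecDft x V_0$ followed by a chain of structural, selection-free tight subsumption steps. The proof is by induction on the tight typing derivation; the only nontrivial case is \rn{Sub-\#}, where the outer subtyping step is appended to the subsumption chain carried by the induction hypothesis. Since the tight typing premise of \rn{Sub-\#} itself does not see \rn{$<:$-Sel-\#} or \rn{Sel-$<:$-\#} under $\vdash_{\#\#}$, no loss is incurred.

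The second step is to invert $\tptDft x {\tTypeDec A S U}$ using inertness of $\G$. Because $\G(x)$ is an inert type, the precise elimination rules \rn{Var!}, \rn{Rec-E!}, \rn{And$_1$-E!}, and \rn{And$_2$-E!} can only extract the constituents of that inert type: field declarations, tight type declarations $\tTypeDec A T T$, and recursive/intersection wrappers. Since the subsumption chain ends in a type declaration, its precise root must itself be a type declaration, and must therefore have the shape $\tTypeDec A T T$ with equal bounds. Inverting the chain---whose only nontrivial step on a type declaration is \rn{Typ-$<:$-Typ-\#}---yields the required $\subTightDft S T$ and $\subTightDft T U$.

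The main obstacle is the inertness-dependent invariant used in the inversion: every precise type of $x$ that is a type declaration has equal bounds. Establishing it requires a small auxiliary structural lemma about the set of types derivable by precise elimination from an inert starting point, carefully handling the substitution of $x$ for the self-variable in \rn{Rec-E!} and the projection through \rn{And$_i$-E!}. Once that invariant is secured, the rest of the argument is rule-by-rule bookkeeping, and no reasoning about values or operational semantics enters---matching the paper's slogan that the meaning of types in DOT should be settled entirely within tight typing.
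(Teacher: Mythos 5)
Your proposal takes essentially the same route as the paper: your witness $T$ with $\typPrecDft x {\tTypeDec A T T}$, $\subTightDft S T$, and $\subTightDft T U$ is exactly the paper's Sel-$<:$-\# Premise lemma, obtained as in the paper by passing to invertible typing (\fullref{lemma:invertible}) and inverting, after which \rn{$<:$-Sel-\#}, \rn{Sel-$<:$-\#}, and \rn{Trans-\#} finish identically. The one inaccuracy is your claim that the tight-to-invertible step holds \emph{in any context}: it does not (e.g.\ if $\G(x)=\bot$, then \rn{Bot-\#} and \rn{Sub-\#} give $x$ arbitrary tight type declarations that invertible typing cannot reproduce), and the paper's \fullref{lemma:invertible} accordingly assumes inertness---which you do have available here, so your argument is unharmed.
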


One nice property of this lemma is that it is stated entirely
in terms of tight typing. Thus, to prove it, we can ignore the
unpredictable world of general typing, and work exclusively in the
intuitive world of tight typing.

But how can we prove it? We would
like to apply the \rn{$<:$-Sel\#} and \rn{Sel-$<:$\#} rules.
Their premises are $\typPrecDft x {\tTypeDec A T T}$.
Therefore, we need to \emph{invert} tight typing,
to show the following:
\begin{lemma}[Sel-$<:-$\# Premise]
    \namedlabel{lemma:type-member-inversion}{Sel-$<:$-\# Premise}
If $\G$ is an inert context, then
     if $\typTightDft x {\tTypeDec A S U}$,
 then there exists a type $T$ such that
     $\typPrecDft x {\tTypeDec A T T}$,
     $\subTightDft S T$, and
     $\subTightDft T U$.
\end{lemma}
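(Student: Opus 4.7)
The plan is to follow the proof recipe sketched in \Cref{sec:proof-overview}: starting from a tight typing of a variable, drop down to invertible typing via \fullref{lemma:invertible}, then descend to precise typing by induction on the invertible typing derivation. Since the conclusion we want involves a precise typing with \emph{equal} bounds, inertness of $\G$ needs to flow through the entire chain; it will be used at the very bottom, where precise typing meets the context lookup.

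First I would apply \fullref{lemma:invertible} to convert $\typTightDft x {\tTypeDec A S U}$ into an invertible typing $\tptDft x {\tTypeDec A S U}$. Then I would proceed by induction on this invertible typing derivation. Invertible typing is designed so that its base case sits directly on top of a precise typing, and each inductive step consists of applying a single, inversion-friendly subtyping rule; this keeps the case analysis short. In the base case, the invertible typing is obtained from a precise typing $\typPrecDft x {\tTypeDec A {T_1} {T_2}}$. Here inertness pays off: since $\G(x)$ is inert, every type declaration it contains has equal bounds, and none of the precise typing rules (\rn{Var!}, \rn{Rec-E!}, \rn{And$_1$-E!}, \rn{And$_2$-E!}) can change those bounds, because they only look up the context, unfold a recursive type, or project a conjunct of an intersection. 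Hence $T_1 = T_2 = T$ for some $T$, and the required witnesses $\subTightDft S T$ and $\subTightDft T U$ follow from \rn{Refl-\#}. In the inductive step, the last subtyping used on the invertible typing must be \rn{Typ-$<:$-Typ-\#}, since this is the only tight subtyping rule that produces a type declaration from a type declaration; it narrows a previous $\tTypeDec A {S_1} {U_1}$ into $\tTypeDec A S U$ while giving $\subTightDft S {S_1}$ and $\subTightDft {U_1} U$. The induction hypothesis supplies $T$ with $\subTightDft {S_1} T$ and $\subTightDft T {U_1}$, and two applications of \rn{Trans-\#} discharge the case.

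The main obstacle is the auxiliary fact that precise typing of a variable in an inert context can only ever exhibit a type declaration with equal bounds. I would factor this out as a short separate lemma and prove it by induction on the precise typing derivation, using the syntactic shape of inert types (a recursive type whose body is an intersection whose type-declaration conjuncts are tight) to rule out any non-tight declaration from appearing. Distinctness of the type labels $A$ in an inert type is also helpful here: it prevents \rn{And-E!} from exposing two conflicting declarations for the same $A$. Once this auxiliary lemma is in place, the base case of the main induction goes through immediately, and the overall proof stays cleanly within the tight world, never needing to reason about the general typing relation or about values.
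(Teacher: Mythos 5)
Your proposal is correct and follows essentially the same route as the paper: apply the $\vdash_{\#}$ to $\vdash_{\#\#}$ theorem to obtain invertible typing, then induct on the invertible derivation, whose only applicable cases are the base rule \rn{Var-\#\#} (where inertness forces the precise type declaration to have equal bounds) and \rn{Typ-$<:$-\#\#} (discharged by the induction hypothesis and \rn{Trans-\#}). The paper states this only as a two-step derivation tree, so your explicit identification of the auxiliary fact that precise typing in an inert context exposes only tight bounds is a faithful elaboration of the same argument, not a different one.
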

We will discuss how to invert tight
typing to prove this lemma in \Cref{sec:inversion}.

Using \fullref{lemma:type-member-inversion}, proving \fullref{lemma:two} is
easy:
\begin{proof}[Proof of \fullref{lemma:two}]
    Apply \fullref{lemma:type-member-inversion}, then \rn{$<:$-Sel-\#} and \rn{Sel-$<:$-\#},
    to get
     $\subTightDft S T <: x.A <: T <: U$.
     The result follows by \rn{Trans-\#}.
\end{proof}

Using \fullref{lemma:two}, proving \fullref{thm:one} is now also quite easy.

\begin{proof}[Proof of \fullref{thm:one}]
The proof is by mutual induction on the tight typing and subtyping derivations
of 
$\typDft t T$ and
$\subDft S U$.
In general, for each rule of general typing, we invoke the corresponding
rule of tight typing. The premises of the tight typing rules differ from
those of the general typing rules in that they require tight typing
in rules that do not extend the context. Since the unextended context
is inert, the general premise implies the tight premise by the induction
hypothesis. Premises that do extend the context use general typing,
so nothing needs to be proven for them. The exception is the
\rn{$<:$-Sel} and \rn{Sel-$<:$} rules.
\fullref{lemma:two} is an
exact replacement for these rules, so we just apply it.
Despite the long explanation, the proof in Coq is only two lines long.
\end{proof}

\subsection{Inversion of Tight Typing}
\label{sec:inversion}

Although reasoning with tight typing is intuitive because
it obeys \fullref{prop:tight-ctxt}, we often need to invert
the tight typing rules to prove properties such as \fullref{lemma:type-member-inversion},
which we used in the proof of \fullref{lemma:two}.
More generally, we need to prove that if $\typTightDft x T$,
where $T$ is of a certain form, then $\G(x) = U$, and
there is a certain relationship between $T$ and $U$.


The obvious approach to proving such inversion properties
is by induction on the derivation of the tight typing.
This usually fails, however, because of cycles in the tight typing
rules. Each language construct typically has both an introduction
and an elimination rule, and the two form a cycle. For example,
if $\typTightDft x T$, then $\typTightDft x \tRec x T$ by \rn{Rec-I-\#},
so again $\typTightDft x T$ by \rn{Rec-E-\#}.
Such cycles block inductive proofs because
a proposition $\typTightDft x T$ is justified by $\typTightDft x \tRec x T$,
which in turn is justified by the original proposition $\typTightDft x T$.
The solution is to define a set of acyclic, invertible rules on which induction is
easy, and to prove that the invertible rules induce the same typing relation
as the cyclic tight typing rules.

The construction of the invertible typing rules is simplified by two restrictions:
\begin{enumerate}
    \item We only ever need to invert typing rules in inert typing contexts.
    \item We only ever need to invert typings of variables and values, not of arbitrary terms.
\end{enumerate}
In the invertible rules, we can thus exclude rules that cannot apply to variables or values,
and rules that cannot apply to inert
types or to types derived from inert types.

It remains to decide, when facing a cycle of two rules that introduce
and eliminate a given language construct, which one of the two rules to
remove and which one to keep in the acyclic, invertible rule set. In general,
because a construct can be introduced an unbounded number of times in
tight typing, we must keep the introduction rule. For example, if
$x$ has type $T$, then $x$ also has type $\tRec y {\tRec {y'} {\tRec {y''} T}}$,
and the invertible rules must generate this type. On the other hand,
the base case of the typing rules for variables, the rule \rn{Var-\#}, gives
each variable $x$ the type $\G(x)$, which in an inert context is an inert
type, and can therefore be a recursive type containing an intersection type.
Since the tight typing rules eliminate the recursion and the intersection,
the invertible rules must also eliminate them. It seems that we have reached
a contradiction: the invertible rules must have both introduction and elimination
rules for recursive and intersection types.

The solution is to split the invertible rules into two phases. The first phase
of rules contains all the elimination rules. After all necessary eliminations
have been performed, a second phase containing only introduction rules
can then perform all necessary introductions. By splitting the rules into two phases,
we ensure that no derivation can cycle between introductions and eliminations,
so the rules are invertible. It turns out that we already have rules for the first phase:
the precise typing rules introduced in \Cref{sec:tight} already contain all of the elimination rules
that apply to variables and values, and eliminate from the type of a variable all
constructs that can appear in an inert type. (Note that even the general DOT typing
rules remove recursive and intersection types only from the types of variables,
not values.)
To construct the invertible introduction rules, we propose the following recipe:
\begin{enumerate}
    \item Start with the tight typing rules.
    \item Inline the subsumption rule (inline the subtyping rules into the typing rules).
        This simplifies the construction, so we define only one relation instead of two
        separate typing and subtyping relations.
    \item Specialize the terms in all rules to variables and values, and remove all rules that cannot apply to
          variables or values.
    \item Remove all elimination rules.
    \item Remove all rules that cannot apply in an inert context. Specifically,
        this means the \rn{Bot-\#} rule, because it has $\typTightDft x \bot$ as a premise,
        but this typing cannot be derived by any of the other remaining rules starting
        from an inert type given to a variable by the \rn{Var-\#} rule or to a value
        by the \rn{All-I-\#} and \rn{\{\}-I-\#} rules.
\end{enumerate}

\begin{wide-rules}

\textbf{Invertible Typing for Variables}
\begin{multicols}{2}

\infrule[Var-\#\#]
  {\typPrecDft x T}
  {\tptDft x T}
  
\infrule[Fld-$<:$-\#\#]
  {\tptDft x {\tFldDec a T} \andalso
   \subTightDft T U}
  {\tptDft x {\tFldDec a U}}

\infrule[Typ-$<:$-\#\#]
  {\tptDft x {\tTypeDec A T U} \\
   \subTightDft {T'} T \andalso
   \subTightDft U {U'}}
  {\tptDft x {\tTypeDec A {T'} {U'}}}
  
\infrule[Rec-I-\#\#]
  {\tptDft x T}
  {\tptDft x {\tRec x T}}

\infrule[All-I-\#\#]
  {\tptDft x {\tForall z S T} \andalso
    \subTightDft {S'} S\\
    \andalso
    \sub {\extendG y {S'}} T {T'}}
  {\tptDft x {\tForall z {S'} {T'}}}

\infrule[And-I-\#\#]
  {\tptDft x T \andalso
   \tptDft x U}
  {\tptDft x {\tAnd T U}}

\infrule[Sel-\#\#]
  {\tptDft x S \andalso
   \typPrecDft y {\tTypeDec A S S}}
  {\tptDft x {y.A}}

\infrule[Top-\#\#]
  {\tptDft x T}
  {\tptDft x \top}

\end{multicols}

\textbf{Invertible Typing for Values}
\begin{multicols}{2}
\infrule[Val-\#\#]
  {\typPrecDft v T}
  {\tptDft v T}
  
\infrule[All-v-\#\#]
  {\tptDft v {\tForall z S T} \andalso
    \subTightDft {S'} S\\
    \andalso
    \sub {\extendG y {S'}} T {T'}}
  {\tptDft v {\tForall z {S'} {T'}}}

\infrule[And-v-\#\#]
  {\tptDft v T \andalso
   \tptDft v U}
  {\tptDft v {\tAnd T U}}

\infrule[Sel-v-\#\#]
  {\tptDft v S \andalso
   \typPrecDft y {\tTypeDec A S S}}
  {\tptDft v {y.A}}
  
\infrule[Top-v-\#\#]
  {\tptDft v T}
  {\tptDft v \top}
\end{multicols}

\caption{Invertible Typing Rules}
\label{fig:typ-inv}

\end{wide-rules}

By applying this recipe to the tight typing rules, we arrive at the invertible typing rules
shown in \Cref{fig:typ-inv}. We must now prove that the typing relation induced by the
invertible typing rules is equal to the typing relation induced by the tight typing
rules (restricted to inert contexts and to variables and values):

\begin{theorem}[$\vdash_{\#}$ to $\vdash_{\#\#}$]
    \namedlabel{lemma:invertible}{$\vdash_{\#}$ to $\vdash_{\#\#}$}
If $\G$ is an inert context, $t$ is a variable or a value, and
     $\typTightDft t T$ ,
     then $\tptDft t T$.
\end{theorem}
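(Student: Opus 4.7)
The plan is to proceed by induction on the derivation of $\typTightDft t T$, with case analysis on whether $t$ is a variable or a value. Most tight typing rules translate directly to invertible rules: \rn{Var-\#} and the value-introduction rules \rn{All-I-\#} and \rn{\{\}-I-\#} produce precise typings, which lift immediately through \rn{Var-\#\#} and \rn{Val-\#\#}; \rn{And-I-\#} matches \rn{And-I-\#\#} (or \rn{And-v-\#\#}); \rn{Rec-I-\#} matches \rn{Rec-I-\#\#}. The real work lives in the two cases whose invertible counterpart is absent by design: \rn{Sub-\#} and \rn{Rec-E-\#}.

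For \rn{Sub-\#} I would prove an auxiliary lemma: in an inert context, if $\tptDft t U$ and $\subTightDft U U'$, then $\tptDft t U'$. This is proved by induction on the subtyping derivation. The congruence-style cases (\rn{Fld-$<:$-Fld-\#}, \rn{Typ-$<:$-Typ-\#}, \rn{All-$<:$-All-\#}) are absorbed by the invertible rules that build subsumption into introduction (\rn{Fld-$<:$-\#\#}, \rn{Typ-$<:$-\#\#}, \rn{All-I-\#\#}, \rn{All-v-\#\#}). The And-elimination cases require a small inversion on $\tptDft t {\tAnd T U}$: for variables the derivation is either \rn{And-I-\#\#} (take the appropriate conjunct) or \rn{Var-\#\#} (apply \rn{And$_1$-E!} or \rn{And$_2$-E!} of precise typing and re-lift); for values only \rn{And-v-\#\#} applies. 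The \rn{$<:$-Sel-\#} rule matches \rn{Sel-\#\#} directly. The \rn{Sel-$<:$-\#} case exploits the inertness of $\G$: the types $\G(x)$ are tight recursive types with distinct type-member labels, so two precise typings $\typPrecDft x {\tTypeDec A S S}$ and $\typPrecDft x {\tTypeDec A T T}$ must give $S = T$, collapsing the step into a reflexivity.

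For \rn{Rec-E-\#} the induction hypothesis gives $\tptDft x {\tRec z T}$ and I need $\tptDft x {\tSubst z x T}$. Inspecting the invertible rules, the only ones that can produce a recursive type for a variable are \rn{Var-\#\#} and \rn{Rec-I-\#\#}. I would therefore prove, by inversion on the invertible derivation, a short lemma that $\tptDft x {\tRec z T}$ implies $\tptDft x {\tSubst z x T}$: in the \rn{Var-\#\#} case, apply \rn{Rec-E!} of precise typing and re-lift through \rn{Var-\#\#}; in the \rn{Rec-I-\#\#} case the conclusion is immediate.

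The main obstacle is the \rn{Bot-\#} subtyping case inside the auxiliary lemma, which requires showing that $\tptDft t \bot$ is unreachable in an inert context. This is exactly what motivated dropping \rn{Bot-\#} from the invertible rules, and it must now be justified. I would prove it by induction on the invertible typing derivation: none of the invertible rules has $\bot$ as its conclusion, and the base cases \rn{Var-\#\#} and \rn{Val-\#\#} cannot yield $\bot$ either, because precise typing on an inert context only decomposes an inert type through \rn{Rec-E!} and the \rn{And-E!} rules into inert sub-components, none of which is $\bot$. Once this vacuity is in hand, the \rn{Bot-\#} case of the auxiliary lemma is trivial, and the remaining cases of the main induction assemble into the desired theorem.
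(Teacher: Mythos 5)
Your proposal is correct and takes essentially the same route as the paper: induction on the tight typing and subtyping derivations with invertible typing in the induction hypothesis, discharging the elimination and subsumption cases by inverting the invertible rules (your auxiliary closure-under-tight-subtyping lemma is just the subtyping half of the paper's mutual induction). The supporting facts you make explicit --- unreachability of $\bot$, uniqueness of tight bounds from inertness, and the inversion for \rn{Rec-E-\#} --- are exactly the details the paper's terse proof leaves implicit.
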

\begin{proof}
    The proof is by induction on the tight subtyping and typing rules.
    Although we said that induction on tight typing usually fails
    because the rules have cycles, in this specific case, the induction
    is quite straightforward because invertible typing is part of the
    induction hypothesis.
    The inductive cases for elimination rules, which
    would usually lead to cycles in the induction, are all discharged
    using the invertible typing in the induction hypothesis.
\end{proof}

With this theorem, inversion proofs such as the proof of \fullref{lemma:type-member-inversion}
become easy inductions on the invertible typing rules:
\begin{proof}[Proof of \fullref{lemma:type-member-inversion}]
%
\begin{mathpar}
    \inferrule*[right=Induction on $\vdash_{\#\#}$]{
    \inferrule*[right=\fullref{lemma:invertible}]{
        \inert \G \\ \typTightDft x {\tTypeDec A S U}
}{
    \inert \G \\ \tptDft x {\tTypeDec A S U} \\
}}{
\inert \G \\ \typPrecDft x {\tTypeDec A T T} \\ \subTightDft S {x.T} \\ \subTightDft {x.T} U
}
\end{mathpar}
\end{proof}
We will see more lemmas that follow the same proof strategy in the next section.

\subsection{Extending to Values}
\label{sec:values}
In general, soundness proofs require canonical-forms lemmas that show that if a value
has a given type, then it is a particular form of value. Following our theme of
a modular proof that deals with one concept at a time, we do most of our work
at the level of types, following the same general recipe.

Because the DOT syntax enforces ANF, before a value can be
used for anything interesting, it must first be assigned to a variable
through a \textsf{let} expression.
Suppose a variable $x$ is bound to a value $v$ by $\tLet x v t$
and the variable $x$ is used somewhere inside $t$. From the type $U$
of the use of $x$, we would like to deduce the form of the value $v$.

We proceed in two steps. First, from a type $U$ such that
$\typ {\G'} x U$, where $\G'$ is the typing context used to type the use
of $x$ occurring inside $t$, we follow the proof recipe to deduce
the type $\G'(x)$ given to $x$ by the typing context.
The typing context $\G'$ is constructed by the premises of the
\rn{Let} typing rule, which extends an existing typing context $\G$ to the typing context $\G'$
by adding a binding $(x:T)$. Here, $T$ is some type such that $\typDft v T$.
Therefore, $\G'(x)$ is this $T$, and we have, in general, that $\typDft v {\G'(x)}$ and thus also $\typ {\G'} v {\G'(x)}$.

For the second step, we know $\typ {\G'} v T$, where the type $T$ has been identified
by the first step, and we wish to deduce the precise type of $v$, and thence invert
the precise value typing rules to obtain the form of $v$.

The following lemmas instantiate these two steps, first for dependent function types,
and then for field member types.



\newcommand{\lemmaCfLambda}{$\forall$ to $\G(x)$}
\begin{lemma}[\lemmaCfLambda]
    \namedlabel{lemma:cf-lambda}{\lemmaCfLambda}
\begin{mathpar}
    \inferrule*
      {\inert\G \andalso
       \typDft z{\tForall x T U}}
      {\G(z) = \tForall x {T'}{U'} \andalso
       \subDft T {T'} \andalso
       \sub{\extendG x {T}}{U'}{U}}
\end{mathpar}
\end{lemma}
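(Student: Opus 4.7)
The plan is to follow the standard recipe laid out in \Cref{sec:proof-overview}. Starting from the premises $\inert\G$ and $\typDft z {\tForall x T U}$, first apply \fullref{thm:one} to obtain $\typTightDft z {\tForall x T U}$, then apply \fullref{lemma:invertible} (which is applicable because $z$ is a variable) to obtain $\tptDft z {\tForall x T U}$. The conclusion will follow by induction on this invertible typing derivation. Only the two rules that can conclude a function type for a variable need to be considered: \rn{Var-\#\#} and \rn{All-I-\#\#}.

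For the base case \rn{Var-\#\#}, we have $\typPrecDft z {\tForall x T U}$. In an inert context, the only precise typing rule that can conclude a dependent function type for a variable is \rn{Var!} itself, since \rn{Rec-E!} and \rn{And$_i$-E!} only eliminate recursive and intersection constructors from object types and cannot produce a $\forall$-type. Hence $\G(z) = \tForall x T U$, and we may take $T' = T$, $U' = U$; the two subtyping obligations hold by \rn{Refl}.

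For the inductive case \rn{All-I-\#\#}, the derivation concludes $\tptDft z {\tForall x T U}$ from premises $\tptDft z {\tForall x S {T_0}}$, $\subTightDft T S$, and $\sub{\extendG x T}{T_0}{U}$ (aligning bound variables). The induction hypothesis yields $\G(z) = \tForall x {S'}{T_0'}$ with $\subDft S {S'}$ and $\sub{\extendG x S}{T_0'}{T_0}$. Take $T' = S'$ and $U' = T_0'$. Then $\subDft T {T'}$ follows by transitivity from $\subDft T S$ (given, lifted from tight to general subtyping) and $\subDft S {S'}$ (induction hypothesis). The remaining obligation $\sub{\extendG x T}{T_0'}{U}$ follows by narrowing $\sub{\extendG x S}{T_0'}{T_0}$ to the smaller context $\extendG x T$, and then transitivity with the rule's premise $\sub{\extendG x T}{T_0}{U}$.

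The main obstacle is the narrowing step for subtyping in the inductive case, which depends on the standard DOT narrowing lemma (that subtyping and typing are preserved when a context binding is replaced by one for a subtype). This is an auxiliary lemma that must be established separately, but it poses no conceptual difficulty within this recipe because by the time we need it we have already descended into tight typing (or equivalently general typing in an inert context), so the paradoxes of bad bounds are not an issue. Aside from narrowing, every other step is either a direct appeal to an already-proven theorem or a routine transitivity manipulation.
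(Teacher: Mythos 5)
Your proof is correct and follows essentially the same route as the paper's: \fullref{thm:one}, then \fullref{lemma:invertible}, then induction down to precise typing and $\G(z)$, with narrowing and transitivity handling the contravariant parameter and covariant result types. The only organizational difference is that the paper runs the induction on $\vdash_{\#\#}$ to reach a precise typing, applies narrowing once afterwards, and then does a separate induction on $\vdash_!$ to recover $\G(z)$, whereas you fold the precise-typing inversion into the \rn{Var-\#\#} base case and the narrowing into the \rn{All-I-\#\#} step; both organizations rely on exactly the same lemmas.
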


\newcommand{\lemmaCfLambdaV}{$\forall$ to $\lambda$}
\begin{lemma}[\lemmaCfLambdaV]
    \namedlabel{lemma:cf-lambda-v}{\lemmaCfLambdaV}
\begin{mathpar}
  \inferrule*
    {\inert\G \andalso
     \typDft v {\tForall x T U}}
    {v= \tLambda x {T'} t \andalso
     \subDft T {T'} \andalso
     \typ{\extendG x {T}} t {U}}
\end{mathpar}
\end{lemma}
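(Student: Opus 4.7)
The plan is to follow the paper's proof recipe specialized to invertible value typing. Given $\inert\G$ and $\typDft v {\tForall x T U}$, I first invoke \fullref{thm:one} to obtain $\typTightDft v {\tForall x T U}$, and then \fullref{lemma:invertible} (applicable since $v$ is a value) to get $\tptDft v {\tForall x T U}$. The rest of the proof is an induction on this invertible value typing derivation.

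Only two of the invertible value typing rules in \Cref{fig:typ-inv} can conclude with a $\forall$-type: \rn{Val-\#\#} and \rn{All-v-\#\#}. In the \rn{Val-\#\#} case, we have $\typPrecDft v {\tForall x T U}$; the only precise value typing rule whose conclusion is a $\forall$-type is \rn{All-I!}, so $v$ must have the form $\tLambda x T t$ for some $t$ with $\typ{\extendG x T} t U$. Taking $T'=T$ discharges the conclusion: $\subDft T T$ holds by \rn{Refl}, and the body typing is already the one required.

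In the \rn{All-v-\#\#} case, there is a smaller invertible value typing $\tptDft v {\tForall x {S_1}{T_1}}$ together with $\subTightDft T {S_1}$ and a subtyping $\sub{\extendG y T}{T_1}{U}$ (up to alpha-renaming of the bound variable). The induction hypothesis applied to $\tptDft v {\tForall x {S_1}{T_1}}$ yields $v = \tLambda x {T'} t$ with $\subDft {S_1}{T'}$ and $\typ{\extendG x {S_1}} t {T_1}$. Transitivity gives $\subDft T {T'}$. A narrowing lemma, combined with $\subDft T {S_1}$ (tight subtyping is a restriction of general subtyping), converts the body typing to $\typ{\extendG x T} t {T_1}$, and applying \rn{Sub} with $\sub{\extendG x T} {T_1} U$ finally yields $\typ{\extendG x T} t U$.

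The main obstacle is the narrowing step: it relies on a standard narrowing lemma stating that if $\typ{\extendG x S} t U$ and $\subDft {S'} S$ (with appropriate well-formedness conditions), then $\typ{\extendG x {S'}} t U$, proven by mutual induction on typing and subtyping. This is routine but non-trivial infrastructure that the proof development needs to establish once and reuse. Alpha-renaming of the $\forall$-bound variable across the inductive step is a further detail to handle carefully, but is entirely standard.
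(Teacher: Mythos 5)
Your proof is correct and follows essentially the same route as the paper's: \fullref{thm:one}, then \fullref{lemma:invertible}, then induction on invertible typing down to an inversion of \rn{All-I!}, finished with narrowing and \rn{Sub}. The only (immaterial) difference is that you fold the narrowing and the inversion of precise value typing into the induction on $\vdash_{\#\#}$ itself, whereas the paper first extracts $\typPrecDft v {\tForall x {T'}{U'}}$ together with the subtyping side conditions and performs the narrowing once at the end.
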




\newcommand{\lemmaCfField}{$\mu$ to $\G(x)$}
\begin{lemma}[\lemmaCfField]
    \namedlabel{lemma:cf-field}{\lemmaCfField}
\begin{mathpar}
  \inferrule*
    {\inert\G \andalso
     \typDft x {\tFldDec a T}}
    {\G(x) = \tHas x {\tFldDec a {T'}} \andalso
     \subDft{T'}{T}}
\end{mathpar}
\end{lemma}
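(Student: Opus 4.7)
The plan is to follow the proof recipe outlined in the overview, which is illustrated with this very lemma as the running example. Starting from $\inert\G$ and $\typDft x {\tFldDec a T}$, I would first apply \fullref{thm:one} to descend into tight typing, obtaining $\typTightDft x {\tFldDec a T}$. Then, since $x$ is a variable and $\G$ is inert, \fullref{lemma:invertible} yields $\tptDft x {\tFldDec a T}$. These two steps are immediate.

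The core work is then a short induction on the invertible typing derivation of $\tptDft x {\tFldDec a T}$. Only two invertible rules for variables can produce a field declaration on the right-hand side. The \rn{Var-\#\#} case gives $\typPrecDft x {\tFldDec a T}$ directly, so we take $T' = T$ and use reflexivity of tight subtyping. The \rn{Fld-$<:$-\#\#} case gives $\tptDft x {\tFldDec a {T_0}}$ with $\subTightDft {T_0} T$; the induction hypothesis yields some $T'$ with $\typPrecDft x {\tFldDec a {T'}}$ and $\subTightDft {T'} {T_0}$, and transitivity finishes. All other invertible rules for variables produce a head constructor other than a field declaration, so their cases are discharged by inversion. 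This yields the intermediate fact: there is a $T'$ with $\typPrecDft x {\tFldDec a {T'}}$ and $\subTightDft {T'} T$.

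The main obstacle is the next step: a second induction, this time on the derivation of $\typPrecDft x {\tFldDec a {T'}}$, to extract the shape of $\G(x)$. Here inertness of $\G$ does the real work. The precise variable rules are \rn{Var!}, \rn{Rec-E!}, \rn{And$_1$-E!}, and \rn{And$_2$-E!}, and I would induct on the derivation, carrying the invariant that the type derived for $x$ is either $\G(x)$ itself or a ``piece'' obtained by unfolding one recursion and then projecting out of intersections. Because $\G$ is inert, $\G(x)$ is either a dependent function type---which cannot reduce to a field declaration under any number of \rn{Rec-E!}/\rn{And-E!} steps and is immediately ruled out---or a recursive type $\tRec x {T''}$ whose body is an intersection of field declarations and tight type declarations. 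Exactly one of these conjuncts must be the field declaration $\tFldDec a {T'}$ we ended up with, which gives the required $\G(x) = \tHas x {\tFldDec a {T'}}$. The delicate point is formulating the induction with a strong enough hypothesis to track ``$x$'s precise type came from projecting the recursive body of $\G(x)$,'' which is why it is convenient to have the syntactic characterization of inert types to invert against.

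Finally, the conclusion requires $\subDft {T'} T$, but we have $\subTightDft {T'} T$; since every tight subtyping derivation is a general subtyping derivation (tight rules are a strict subset of general rules up to the \rn{$<:$-Sel-\#}/\rn{Sel-$<:$-\#} restriction, which is strictly weaker than its general counterpart), this direction is immediate, completing the proof.
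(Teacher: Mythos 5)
Your proposal is correct and follows exactly the paper's proof, which is the four-step recipe illustrated with this very lemma in the overview: \fullref{thm:one}, then \fullref{lemma:invertible}, then induction on invertible typing to reach $\typPrecDft x {\tFldDec a {T'}}$, then induction on precise typing using inertness of $\G$ to recover $\G(x)$. Your added detail on the case analyses and on converting tight subtyping back to general subtyping is consistent with what the paper leaves implicit.
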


\newcommand{\lemmaCfFieldV}{$\mu$ to $\nu$}
\begin{lemma}[\lemmaCfFieldV]
    \namedlabel{lemma:cf-field-v}{\lemmaCfFieldV}
\begin{mathpar}
  \inferrule*
    {\inert\G \andalso
     \typDft v \tHas x {\tFldDec a {T}}}
    {v=\tValHas x {\tFldDec a {T}} {\set{a=t}} \andalso
     \typDft t T}
\end{mathpar}
\end{lemma}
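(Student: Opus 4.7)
The plan is to follow the paper's standard canonical-forms recipe: drop from general typing to tight typing via \fullref{thm:one}, then to invertible typing via \fullref{lemma:invertible}, then to precise typing by inspecting the shape of the goal type, and finally invert the precise value typing rule together with the definition typing rules to read off the syntactic shape of $v$. Concretely, starting from $\inert \G$ and $\typDft v {\tRec x {\tFldDec a T}}$, \fullref{thm:one} yields $\typTightDft v {\tRec x {\tFldDec a T}}$. Because $v$ is a value, \fullref{lemma:invertible} applies and gives $\tptDft v {\tRec x {\tFldDec a T}}$.

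Next I would do case analysis on the invertible value typing rules of \Cref{fig:typ-inv}. The conclusions of \rn{All-v-\#\#}, \rn{And-v-\#\#}, \rn{Sel-v-\#\#}, and \rn{Top-v-\#\#} produce types of the forms $\tForall z {S'} {T'}$, $\tAnd {T''} {U}$, $y.A$, and $\top$, none of which can unify with the recursive shape in our goal. So only \rn{Val-\#\#} can be the last rule applied, giving $\typPrecDft v {\tRec x {\tFldDec a T}}$. Among the two precise value typing rules of \Cref{fig:typ-prec}, only \rn{\{\}-I!} produces a recursive type, and it forces $v = \tNew x {\tFldDec a T} d$ together with $\typd {\extendG x {\tFldDec a T}} d {\tFldDec a T}$. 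A final inversion on the definition typing rules of \Cref{fig:def-rules} rules out \rn{Def-Typ} and \rn{AndDef-I} (the target is neither a type declaration nor an intersection), leaving \rn{Def-Trm}, which forces $d = \set{a = t}$ and $\typ {\extendG x {\tFldDec a T}} t T$.

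I do not expect a serious obstacle: the recipe has already absorbed the hard reasoning into \fullref{thm:one} and \fullref{lemma:invertible}, so the remaining work is pure pattern matching on rule conclusions. The only bookkeeping subtlety is that the typing we extract for $t$ lives in the context extended by the self-binding $x:\tFldDec a T$, matching the usual convention under which the body of a $\nu$-term is type-checked; this is the natural reading of the conclusion $\typDft t T$ in the lemma, since $t$ is written inside the scope of the self-variable $x$. The overall structure mirrors the proof sketch for \fullref{lemma:cf-lambda-v} on $\lambda$-values, replacing the function-shaped invertible/precise cases and \rn{All-I!} with the recursive-shaped cases and \rn{\{\}-I!}.
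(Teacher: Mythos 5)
Your proof is correct and follows essentially the same route as the paper's: \fullref{thm:one} to drop to tight typing, \fullref{lemma:invertible} to reach invertible typing, then down to precise typing and inversion of \rn{\{\}-I!} plus the definition typing rules. Your case analysis on the invertible value rules is exactly what the paper's ``induction on $\vdash_{\#\#}$'' amounts to for a $\mu$-shaped goal type, and the self-binding subtlety you flag in reading the conclusion $\typDft t T$ is one the paper's own proof sketch also glosses over.
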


%


The proofs of all of the lemmas follow the same general proof recipe that
we introduced for \Cref{lemma:cf-field} in \Cref{sec:proof-overview}. We show
the proof of \Cref{lemma:cf-lambda-v} here, and proofs of the other three
lemmas in the Appendix.

\begin{proof}[Proof of \fullref{lemma:cf-lambda-v}]
\begin{mathpar}
    \inferrule*[right=$\rn{Sub}$]{
    \inferrule*[Right=Narrowing]{
    \inferrule*[Right=Induction on $\vdash_!$]{
    \inferrule*[Right=Induction on $\vdash_{\#\#}$]{
    \inferrule*[right=\fullref{lemma:invertible}]{
        \inferrule*[Right=\fullref{thm:one}]{
            \inert \G \\ \typDft v {\tForall x {T}{U}} 
}{
    \inert \G \\ \typTightDft v {\tForall x {T}{U}} 
}}{
\inert \G \\ \tptDft v {\tForall x {T}{U}} 
}}{
\inert \G \\ \typPrecDft v {\tForall x {T'}{U'}} \\ \subDft {T} {T'} \\ \sub{\extendG x {T'}}{U'}{U}
}}{
v= \tLambda x {T'} t \\ \typ{\extendG x {T'}} t {U'} \\ \subDft {T} {T'} \\ \sub{\extendG x {T'}}{U'}{U}
}}{
v= \tLambda x {T'} t \\ \typ{\extendG x {T}} t {U'} \\ \subDft {T} {T'} \\ \sub{\extendG x {T}}{U'}{U}
}}{
v= \tLambda x {T'} t \\ \typ{\extendG x {T}} t {U} \\ \subDft {T} {T'}
}
\end{mathpar}
\end{proof}

Since the return type of a dependent function type depends on the parameter type,
this proof and the proof of \Cref{lemma:cf-lambda} rely on a standard
narrowing property, which states
that making a typing context more precise by substituting
one of the types by its subtype preserves the typing and subtyping relations.
\begin{lemma}[Narrowing]
    Suppose $\G(x)=T$ and $\sub {\G[x\colon T']} {T'} T$.
	Then $\typDft t U$ implies $\typ{\G[x\colon T']} t U$,
	and $\subDft S U$ implies $\sub{\G[x\colon T']} S U$.
\end{lemma}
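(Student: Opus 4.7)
The plan is to prove both conclusions by simultaneous induction on the derivations of $\typDft t U$ and $\subDft S U$. For almost every rule, the induction hypothesis is applied to each premise and the same rule is re-applied in the narrowed context $\G[x\colon T']$; only the variable rule and the rules that extend the context require any thought.

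The one genuinely interesting case is \rn{Var}. Suppose $\typDft z S$ is derived because $\G(z) = S$. If $z \neq x$ then $\G[x\colon T'](z) = S$ as well, so \rn{Var} applies directly in the narrowed context. If $z = x$ then $S = T$ and $\G[x\colon T'](x) = T'$, so by \rn{Var} we obtain $\typ{\G[x\colon T']} x {T'}$, and then the hypothesis $\sub{\G[x\colon T']}{T'}{T}$ together with \rn{Sub} gives $\typ{\G[x\colon T']} x T$. The subtyping rules \rn{$<:$-Sel} and \rn{Sel-$<:$} carry typing premises; for these we invoke the typing half of the induction hypothesis on the premise and then re-apply the same subtyping rule.

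The cases that extend the context---\rn{All-I}, \rn{\{\}-I}, \rn{Let}, and \rn{All-$<:$-All}---have a premise derived under $\extendG y S$ for some fresh $y$. To apply the induction hypothesis to such a premise we need the narrowing assumption in the extended context, namely $\sub{(\G[x\colon T'],\,y\colon S)}{T'}{T}$. This follows from the given $\sub{\G[x\colon T']}{T'}{T}$ by a standard weakening lemma stating that appending a fresh binding to the context preserves both typing and subtyping. Once weakening is in hand, the induction hypothesis yields the narrowed premise $\typ{(\G[x\colon T'],\,y\colon S)} t U$, and re-applying the rule completes the case.

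The main obstacle is therefore setting up the auxiliary weakening lemma for typing and subtyping and ensuring that, through every context-extending rule, the narrowing assumption is transported into the enlarged context; everything else is a mechanical re-application of rules under the induction hypothesis. Because the narrowed variable $x$ occupies a fixed position in the context and the extending rules only append fresh bindings after it, there is no shadowing issue to worry about, and the induction goes through uniformly.
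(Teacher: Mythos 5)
Your proposal is correct and matches the paper's treatment: the paper gives no details beyond stating that narrowing ``is standard, with no issues specific to DOT, by induction on the typing and subtyping rules'' (citing the original proof), and your mutual induction with the \rn{Var}/\rn{Sub} case for $z=x$, the weakening step for context-extending rules, and the typing-IH step for \rn{$<:$-Sel}/\rn{Sel-$<:$} is exactly that standard argument. The only detail worth adding is that the mutual induction must also include the definition-typing judgement, since the premise of \rn{\{\}-I} extends the context with a definition typing rather than a term typing.
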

Narrowing is proved for DOT by \citet{wadlerfest}.
The proof is standard,
with no issues specific to DOT, by induction on the typing and
subtyping rules.

\subsection{Operational Semantics}
\label{sec:operational}
In general, a type soundness proof for a given operational semantics shows that
if a term $t$ has a type $T$, then it
steps to another term of the same type (in a small-step semantics), or it
reduces to a value of the same type (in a big-step semantics).
In both cases, the first step of the proof is to deduce
the form of the term from its type. The second step is then to
apply the evaluation relation to obtain a new term (or value),
apply the typing rules to obtain its type,
and (hopefully) conclude that it has the original type $T$.
The techniques presented in the preceding sections
solve the first step for DOT. The second step is then
primarily direct application of the evaluation and typing rules.

Although the soundness statement applies to the whole program $t$ and requires
it to have a type in an empty typing context,
the operational semantics typically defines evaluation of a complex term
recursively in terms of evaluation of its subterms $u$.
The soundness proof typically uses induction on the structure of $t$,
and it must prove that if $\typ \emptyset t T$,
then $\typDft u U$, where the typing context $\G$ is carefully
determined by the position of $u$ in $t$.
This part is standard, and the DOT soundness proof depends on this property in the same way as
soundness proofs for other calculi.

The part that is unique to DOT
is that
the typing context $\G$ in which the subterm $u$
is typed must also be inert, so that we can use all of the theory developed in
preceding sections.
Thus, given that $\typ \emptyset t T$ and a subterm~$u$ of~$t$ to be
reduced, we need a way to find an \emph{inert} typing context $\G$ in which
$\typDft u U$. This is needed for any sound semantics of DOT, whether
big-step or small-step.

To summarize, usually, to prove a calculus sound, it must obey the property that if
$\typ \emptyset t T$ and $u$ is a subterm of $t$ that may be evaluated during evaluation of $t$,
then $\typDft{u}{U}$ for some carefully determined typing context $\G$. DOT adds a further requirement
that $\G$ must be inert.

\begin{wide-rules}

\begin{flalign}
  e&\Coloneqq []\ |\ \tLet x {[]} t\ |\ \tLet x v e
                       \tag*{\textbf{Evaluation context}}
\end{flalign}

\infrule[Term]
  {\reduction t {t'}}
  {\reduction {e[t]}{e[t']}}

\infrule[Apply]
  {v=\tLambda z T t}
  {\reduction {\tLet x v {e[x\, y]}} {\tLet x v {e[\tSubst z y t]}}}

\infrule[Project]
  {v=\tNew x T {\dots {\set{a=t} \dots}}}
  {\reduction {\tLet x v {e[x.a]}} {\tLet x v {e[t]}}}

\infax[Let-Var]
  {\reduction {\tLet x y t} {\tSubst x y t}}

\infax[Let-Let]
  {\reduction {\tLet x {\tLet y s t} u} {\tLet y s {\tLet x t u}}}

    \caption{DOT Operational Semantics \citep{wadlerfest}}
    \label{fig:red}
  \end{wide-rules}

The variant of DOT that we have studied in this paper is from~\citet{wadlerfest},
which defines the small-step semantics with evaluation contexts
shown in \Cref{fig:red}.
For a small-step semantics, the soundness theorem to prove is:
\begin{theorem}[Soundness]
    If $\typ \emptyset t T$, then $t$ is a normal form or
    $t \to t'$ and $\typ \emptyset {t'} T$.
\end{theorem}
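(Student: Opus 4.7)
The plan is to follow the standard Wright--Felleisen recipe and split the theorem into Progress (every well-typed, non-normal-form term can take a step) and Preservation (every step preserves typing). Both parts are proved by induction on the shape of the evaluation context $e$ in which the active redex sits. The key enabling observation is that the typing context accumulated by peeling off the binders of $e = \tLet {x_1} {v_1} {\tLet {x_2} {v_2} {\cdots}}$ binds each $x_i$ to the type of the value $v_i$, and since the only rules that type values (\rn{All-I} and \rn{\{\}-I}) always assign inert types, the resulting typing context is automatically inert. Thus, once we have reduced the question to typing the redex, the full machinery developed in the preceding sections (\fullref{thm:one}, the canonical-forms lemmas) becomes available.

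For Preservation I would case-split on the reduction rule. In the \rn{Apply} case the term has the form $\tLet x {\tLambda z {T'} s} {e[x\, y]}$; applying \fullref{lemma:cf-lambda} and \fullref{lemma:cf-lambda-v} to the binding gives that the body $s$ has the required return type in the extended context, and a standard substitution lemma closes the case after substituting $y$ for $z$. In the \rn{Project} case, \fullref{lemma:cf-field} and \fullref{lemma:cf-field-v} together extract the definition $\set{a = t}$ with $t$ of the expected type. The \rn{Let-Var} and \rn{Let-Let} cases follow from substitution and from commuting typing through nested lets, both routine inductions on typing. The \rn{Term} case follows from the induction hypothesis combined with the fact that replacing a subterm inside an evaluation context by one of the same type preserves the overall typing.

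For Progress I would again recurse on the evaluation context. If $t$ is a value, it is a normal form. Otherwise I drill down to the active position, which holds either a variable, a selection $x.a$, an application $x\, y$, a value, or a nested let. A variable at the head of a surrounding let triggers \rn{Let-Var}; a let-bound value with another let immediately inside triggers \rn{Let-Let}; and selections and applications require \fullref{lemma:cf-field} together with \fullref{lemma:cf-field-v} (respectively \fullref{lemma:cf-lambda} together with \fullref{lemma:cf-lambda-v}) in order to identify the value bound to $x$ somewhere in the outer evaluation context as an object or a $\lambda$ of the expected shape; after that \rn{Project} or \rn{Apply} fires.

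The main obstacle is establishing the bridge between the lookup $\G(x)$ produced by the canonical-forms lemmas and the actual value bound to $x$ inside the evaluation context. Concretely, one needs an auxiliary ``corresponding values'' lemma asserting that when the accumulated typing context $\G$ assigns $x$ the type $T$, the value $v$ occupying the matching binder $\tLet x v {\cdot}$ of $e$ is typeable with $T$ in the appropriate sub-context. Fortunately inertness is guaranteed by construction, so the lemma is proved simply by invoking \fullref{lemma:cf-lambda-v} and \fullref{lemma:cf-field-v} on the value's type. Once that bridge is in place, every reduction rule fires as expected, and the soundness theorem is assembled from Progress and Preservation by a single outer induction on the evaluation context of $t$.
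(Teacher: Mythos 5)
There is a genuine gap at the heart of your argument: the claim that the accumulated typing context is ``automatically inert'' because the value-typing rules \rn{All-I} and \rn{\{\}-I} always assign inert types. This does not follow. The premise of the \rn{Let} rule is $\typDft {v_i} T$ for an \emph{arbitrary} derivable type $T$ of the value, and that derivation may end in \rn{Sub} (or, for a variable-like use, pass through other non-precise rules), so the type recorded in the context can be any supertype of the value's precise type --- for instance $\top$, or a recursive type whose type members have been widened to non-tight bounds. The paper makes exactly this point: every value \emph{has} an inert type, but not every type \emph{of} a value is inert, so $\typDft{v_i}{\G(x_i)}$ alone does not make $\G$ inert. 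Without inertness you cannot invoke \fullref{thm:one}, and the canonical-forms lemmas (\fullref{lemma:cf-lambda}, \fullref{lemma:cf-lambda-v}, \fullref{lemma:cf-field}, \fullref{lemma:cf-field-v}) all require an inert context, so the ``full machinery'' you want to deploy is not yet available at the point where you deploy it.

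The missing ingredient is a small lemma stating that if $\typDft v T$ then there is a $T'$ with $\typPrecDft v {T'}$ and $\subDft {T'} T$ (proved by a short induction on the typing of $v$). Each such $T'$ is inert because it is a precise value type. Applying this to every binding of the evaluation context yields an inert context $\G'$ with $\sub{\G'}{\G'(x_i)}{\G(x_i)}$, and the Narrowing lemma then transports the typing of the redex from $\G$ into $\G'$, after which the canonical-forms lemmas apply. Your ``corresponding values'' bridge is essentially the inversion of the \rn{Let} premises, which is fine, but it must be combined with this precise-type-plus-narrowing step rather than with a direct appeal to inertness ``by construction.'' The remainder of your outline --- case analysis on the reduction rules, the substitution lemma for \rn{Apply} and \rn{Let-Var}, and the routine \rn{Term} and \rn{Let-Let} cases --- matches the paper's argument.
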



In the soundness proof, the program term $t$ is decomposed into an
evaluation context $e$ containing a subterm $u$.
Each evaluation context is a sequence of $\textsf{let}$ bindings
that bind variables to values.
As a specific example,
consider the case when $u$ is a function application $x\ y$.
We must show that
$t = e[x\ y] = e'[\tLet x {\tLambda y {S''} s} e''[x\ y]]$, so that the
corresponding reduction rule can be applied.
Since $\typ \emptyset {e[x\ y]} T$ and the evaluation context $e$
is a sequence of $\textsf{let}$ terms that bind variables $x_i$ to values $v_i$,
we can invert the \rn{Let} typing rule of each such binding.
The premises of the \rn{Let} rule add to the typing context the binding
$(x_i, T)$, where $T$ is a type such that $\typDft{v_i}T$. Thus, we
obtain $\typDft {x\ y} U$ in a context $\G$ in which, for each $i$, $\typDft{v_i}{\G(x_i)}$.
Since $x$ occurs in a well-typed function application, we would like to
deduce that it must have a function type, and thence that the value
$v$ that it is bound to must be a function. However, as we saw in \Cref{sec:bad-bounds},
nothing can be deduced from the type of a term in an unrestricted DOT typing context.

In order to reason about $x$, we require a typing in an \emph{inert} typing context.
We know that $\typDft{v_i}{\G(x_i)}$, but even though every value
has an inert type (given by the precise typing rules for
values), not every type of a value is inert. Thus, $\typDft{v_i}{\G(x_i)}$
alone is not sufficient to conclude that $\G(x_i)$ is inert.
One more small lemma is necessary:
\begin{lemma}
If $\typDft v T$, then there exists a type $T'$ such that
$\typPrecDft v {T'}$ and $\subDft {T'} T$.
\end{lemma}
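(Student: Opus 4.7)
The plan is to proceed by induction on the derivation of $\typDft v T$, exploiting the fact that $v$ is a value rather than an arbitrary term. This drastically restricts which general typing rules could have been used at the root of the derivation: inspecting the general typing rules, the rules \rn{Var}, \rn{All-E}, \rn{\{\}-E}, and \rn{Let} all give types to non-value terms, and as the paper notes, \rn{Rec-I}, \rn{Rec-E}, and \rn{And-I} apply only to variables. So the only rules whose conclusion can type a value $v$ are \rn{All-I}, \rn{\{\}-I}, and \rn{Sub}.

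In the two base cases \rn{All-I} and \rn{\{\}-I}, the derivation is already a precise value typing: comparing \Cref{fig:typ-prec}, the rules \rn{All-I!} and \rn{\{\}-I!} have exactly the same premises and conclusion as their general counterparts. So I can take $T' = T$ and close using \rn{Refl} for the subtyping obligation $\subDft {T'} T$.

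The inductive case is \rn{Sub}, where $\typDft v T$ follows from some $\typDft v S$ together with $\subDft S T$. By the induction hypothesis on the premise $\typDft v S$, there exists $T'$ with $\typPrecDft v {T'}$ and $\subDft {T'} S$. Transitivity of subtyping (\rn{Trans}) then yields $\subDft {T'} T$, and the same witness $T'$ works for the original typing. Note that precise value typing does not have its own subsumption rule, so the entire chain of subsumptions accumulated in the general derivation gets collapsed into a single subtyping step at the end, which is exactly what the lemma demands.

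There is no real obstacle here: the proof is essentially a case analysis on the last rule of the derivation, enabled by the syntactic observation that values admit only three general typing rules. The only thing to be mildly careful about is ensuring that the case analysis is exhaustive for values; as long as one has checked that every rule other than \rn{All-I}, \rn{\{\}-I}, and \rn{Sub} has a non-value conclusion, the induction goes through in a handful of lines. Unlike the earlier inversion results in the paper, this lemma needs neither \fullref{thm:one} nor the invertible typing machinery, because precise typing of values already contains the two relevant introduction rules directly.
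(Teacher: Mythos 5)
Your proof is correct and follows exactly the paper's own argument: induction on the derivation of $\typDft v T$, using the facts that only \rn{All-I}, \rn{\{\}-I}, and \rn{Sub} can type a value and that the subtyping conclusion in the induction hypothesis absorbs the \rn{Sub} case via \rn{Trans}. The paper states this only as a brief sketch; your write-up fills in the same details it alludes to.
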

\begin{proof}[Proof~\citep{wadlerfest}]
The proof is by induction
on the derivation of $\typDft v T$, and is short because few of the typing
rules apply to values, and because the lemma includes subtyping in its
conclusion (and therefore in the induction hypothesis).
\end{proof}
Each type $T'$ generated by the lemma is inert because it is the precise type of some value.
Applying the lemma to each type in the typing context $\G$, we obtain
an inert typing context $\G'$ such that for each $x_i$,
$\sub {\G'}{\G'(x_i)}{\G(x_i)}$. 
By narrowing, since $\typDft {x\ y} U$,
it is also the case that $\typ {\G'} {x\ y} U$ in the inert typing context $\G'$.

\newcommand{\extend}[3]{(#1,\,#2\colon #3)}

From the typing rule that infers $\typTight {\G'} {x\ y} U$,
we determine
that $x$ has a function type $\typ {\G'} x {\tForall y S U}$.
We can then use \fullref{lemma:cf-lambda} to show that $\G'(x_i) = \tForall{y}{S'}{U'}$
with
$\sub {\G'} {S} {S'}$ and
$\sub {\extend {\G'}{y}{S}} {U'} U$,
and \fullref{lemma:cf-lambda-v} to show that the value $v$ assigned to $x$ in the
evaluation context is $\tLambda y {S''} {s}$ with
$\sub {\G'} {S} {S''}$.
Similar reasoning applies
when $u$ is a field selection $x.a$ (using \fullref{lemma:cf-field} and \fullref{lemma:cf-field-v}), and simpler reasoning applies when
$u$ is a variable, value, or let binding.

Finally,
we must check that each reduction step preserves the type of a
term. The only non-trivial case is function application, which
reduces $x\ y$, where $x$ is bound
to a function $\tLambda z T t$, to $\tSubst z y t$.
Proving preservation requires a substitution lemma.
\begin{lemma}
  If $\typ{\extendG x S} t T$ and $\typ \G y {\tSubst x y S}$ then
  $\typ \G {\tSubst x y t} {\tSubst x y T}$.
\end{lemma}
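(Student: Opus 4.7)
The approach is a simultaneous induction on the derivations of the three mutually recursive judgments $\typ{\extendG x S} t T$, $\sub{\extendG x S} S' T'$, and the corresponding definition-typing judgment. Because several typing rules (notably \rn{All-I}, \rn{\{\}-I}, \rn{Let}, and \rn{All-$<:$-All}) descend under binders that add a fresh variable to the context, I would first strengthen the statement so that $x$ may appear anywhere in the typing context, not only at the very end: if $\typ{\G_1,\,x\colon S,\,\G_2} t T$ and $\typ{\G_1} y {\tSubst x y S}$, then $\typ{\G_1,\,\tSubst x y {\G_2}} {\tSubst x y t} {\tSubst x y T}$, where substitution on a typing context is applied pointwise to the types it stores. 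The lemma as stated is the special case $\G_2 = \emptyset$, and this generalized form is what the induction actually proves.

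\textbf{Key cases.} The only case with any real content is \rn{Var}. If $t = x$, then $T = S$, the substituted term is $y$, and the substituted type is $\tSubst x y S$, so the goal is exactly the second hypothesis after weakening it to $\G_1,\tSubst x y{\G_2}$. If $t = z$ with $z \neq x$, then $\tSubst x y t = z$ and the type assigned to $z$ by the new context is the pointwise substitution of the type assigned by the old one, so \rn{Var} closes the case. Rules that do not introduce binders (elimination rules, recursion introduction and elimination on variables, \rn{And-I}, \rn{Sub}, and all subtyping rules except \rn{All-$<:$-All}) are discharged by applying the induction hypothesis to their premises and reassembling the rule, using that $\tSubst x y \cdot$ commutes with every type constructor; for instance, \rn{$<:$-Sel} works because $\tSubst x y {(z.A)} = (\tSubst x y z).A$. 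Binder rules choose a bound variable fresh for $\G_1$, $\G_2$, $x$, $y$, and all relevant free-variable sets, so that substitution pushes under the binder without capture and so that $\typ{\G_1} y {\tSubst x y S}$ can be weakened into the extended context before invoking the induction hypothesis.

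\textbf{Main obstacle.} The difficult part is administrative rather than conceptual: before beginning the main induction I would have to establish the routine substitution-commutation facts for terms, types, definitions, and contexts, in particular an identity of the form $\tSubst x y {(\tSubst z x U)} = \tSubst z y {(\tSubst x y U)}$ when $z$ is fresh (needed in \rn{Rec-E} and \rn{All-E}), the preservation of freshness side-conditions such as $x \notin \fv T$ under substitution, and compatibility between context weakening and substitution on contexts. The potential clash between the substituted variable $x$ and the self-variable bound by a recursive type $\tRec x T$ or by the \rn{\{\}-I} rule requires careful alpha-renaming and is easy to get subtly wrong, especially for the \rn{\{\}-I} case where the same variable appears simultaneously in the type $T$ and in the definitions $d$. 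Once these basic properties of substitution are in place the induction itself is mechanical, proceeds rule by rule, and has no interaction with inertness, tight typing, or any of the DOT-specific machinery developed earlier in the paper.
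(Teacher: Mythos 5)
Your proposal is correct and matches the paper's treatment: the paper simply defers to the standard proof of \citet{wadlerfest}, which is exactly the mutual induction on typing, subtyping, and definition typing (generalized over the position of $x$ in the context) that you describe, with the only DOT-relevant observation being the one you implicitly rely on --- that ANF means only variable-for-variable substitution is ever needed. Your added detail on the \rn{Var} case, binder freshness, and substitution-commutation identities is the routine content the paper elides by calling the proof ``standard, with no issues specific to DOT.''
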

The lemma is proven by~\citet{wadlerfest}. The proof is standard,
with no issues specific to DOT, by induction on the typing and
subtyping rules. Thanks to the use of A-normal form in the DOT
syntax, function application and therefore substitution is needed
only for substituting variables for other variables.

Our paper comes with a Coq-formalized version of the presented proof.
It is based on the original Coq proof by~\citet{wadlerfest}.

\section{Discussion}\label{sec:discussion}
\subsection{Modifications of the Calculus}
The most common expected extensions of a calculus are the addition
of new forms of values and terms, of new forms of types and typing rules,
and changes to the evaluation rules. Most extensions will change multiple
aspects (e.g., add a new form of value and an associated type), but we
discuss each change individually.

The only part of our proof that deals with values are the canonical
forms lemmas in \Cref{sec:values} and the final safety theorem.
A new form of value will require an additional canonical forms
lemma. The lemma can follow the general recipe, so it does not
need to reason with the general DOT typing rules, but only
in invertible typing, which is designed to make
inductive reasoning easy.

The only part of the proof that deals with terms is the final safety
theorem. The only non-trivial change required when adding a new term
is to add the evaluation semantics of the term to that theorem.

Adding a new form of type is a more significant change. Given general
typing rules for the new type, we must incorporate the changes into
the tight, invertible, and precise typing. Tight typing differs from
general typing only in its handling of abstract type members and type
projections, so changes unrelated to those features can be incorporated
directly into tight typing. A change involving abstract type members
or type projections requires corresponding modifications to tight typing.
\fullref{prop:tight-ctxt} gives a modular specification to guide the design
of such modifications. Specifically, we know that as long as the modified
tight typing rules satisfy the property and we can prove \fullref{thm:one}, then the proof recipe and the rest of
the whole soundness proof will continue to hold without requiring non-trivial
changes. To incorporate the modifications into invertible and precise
typing, it suffices to follow the general recipe outlined in \Cref{sec:inversion}.
Specifically, we must classify the new tight typing rules as either introducing
or eliminating a syntactic construct, and then add them to either invertible
or precise typing, respectively.

A change to the evaluation rules of the calculus does not affect any of
the reasoning in Sections~\ref{sec:proof-overview} to~\ref{sec:values},
since those sections are independent of any particular evaluation
semantics. In \Cref{sec:operational}, we identified the crucial
property that any semantics must satisfy if it is to be sound:
any subterm $t$ that it evaluates must be typeable in an inert context.
This is another modular specification that can guide our design of
any new operational semantics. Provided that this property is satisfied,
the only non-trivial modifications required in the proof are localized
to the final safety theorem.

\subsection{The Struggle for ``Good'' Bounds}
\label{sec:good-bounds}

A recurring theme in previous work on DOT has been the struggle to
enforce ``good'' bounds. A type member declaration $\tTypeDec A S U$ is considered
to have ``good'' bounds if $S <: U$. If all type members could be
forced to maintain ``good'' bounds, it would prevent an
object of type $\tRec x {\tTypeDec A S U}$ from introducing a new,
possibly non-sensical subtyping relationship $S <: U$ from
$S <: x.A <: U$ and transitivity.
Many of the challenges along the way
to defining a sound DOT calculus arose from the
negative interaction between ``good'' bounds and other properties, such as
narrowing and transitivity. For example, although both
$\tTypeDec A \bot \bot$ and
$\tTypeDec A \top \top$ have ``good'' bounds,
the narrowed type
$\tAnd{\tTypeDec A \bot \bot}{\tTypeDec A \top \top}$ causes trouble:
in the function
$\tLambda x {
    \tAnd{\tTypeDec A \bot \bot}{\tTypeDec A \top \top}}
t$, the body $t$ is type-checked in a typing context in which
$\top <: x.A <: \bot$.

Not only do ``good'' bounds interact poorly with other desirable
properties, but even defining precisely what ``good'' bounds are
is surprisingly elusive. Informally, bounds are ``good'' if $S<:U$.
But in what typing context should this subtyping relationship hold?
In deciding whether the type
$\tRec x {\tTypeDec A S U}$ should be allowable, it seems appropriate
to respect the recursion implied by $\mu$ and use a context that includes
$x$; that is, to require that
$\sub {\extendG x {\tTypeDec A S U}} S U$.
But this statement is always true regardless of the types $S$ and $U$
because it is self-justifying:
$\sub {\extendG x {\tTypeDec A S U}} S {x.A} <: U$.
If we decide instead to exclude the self-reference $x$ from the context
used to decide whether $S <: U$,
we exclude many desirable types from the definition of ``good'' bounds.
For example, we consider ``bad'' the type
$\tRec x {\tAnd{\tAnd{\tTypeDec A \bot \top}{\tTypeDec B {x.A} {x.C}}}{\tTypeDec C \bot \top}}$
that innocently defines three type members with $A <: B <: C$,
because $x.A$ cannot be a subtype of $x.C$ without $x$ in the context.
We also consider ``bad'' the following type that defines two type members $A <: B$
constrained to be function types:
$\tRec x {\tAnd{\tTypeDec A \bot {\tForall y \bot \top}}{\tTypeDec B {x.A} {\tForall y \bot \top}}}$.
Again, $x.A$ cannot be a subtype of $\tForall y \bot \top$ without $x$
in the context. Finally, such a definition of ``good'' bounds restricts
the applicability of type aliases: the following type defines $A$ and
$B$ as aliases for $\top$ and $\bot$, respectively, but cannot use these aliases
in the bounds of $C$ because $x.B \not{<:} x.A$ in a context without $x$:
$\tRec x {\tAnd{\tAnd{\tTypeDec A \top \top}{\tTypeDec B \bot \bot}}{\tTypeDec C {x.B} {x.A}}}$.
Although it would be possible to come up with some definition of
``good'' bounds that handles these specific examples, the definition
of what was intended to be an obvious and intuitive concept would
become very complicated, and other more sophisticated counterexamples
would probably continue to exist. Thus, it appears that trying to
enforce ``good'' bounds, and even trying to define what ``good'' bounds are,
is a dead end.

By contrast, inert types obey a purely syntactic property that is easily
defined and checked, without requiring a subtyping judgement in some
typing context that would have to be specified. The property provided
by an inert typing context
can be stated precisely and formally~(\fullref{prop:good-ctxt}).

\section{Related Work}\label{sec:related}

\subsection{DOT Soundness Proofs}

The work most closely related to ours is \citet{wadlerfest}, which defines and proves sound
the variant of the DOT calculus for which we have developed our alternative soundness proof.
That work also defines tight typing, though it does not use it as pervasively
as our proof does.

A central notion of that proof is store correspondence,
a relationship between typing contexts and stores of runtime values.
A typing context $\G$ corresponds to a store $s$ if for every variable $x$,
$\typPrecDft {s(x)} {\G(x)}$. Typing and subtyping in a context $\G$ that
corresponds to some store $s$ have similar predictable behaviour as they
do in an inert context. Part of the proof consists of lemmas
that relate internal details of values in stores with internal details
of types in corresponding contexts. By contrast, the property of inert
contexts is independent of values, so our proof does not depend on such
lemmas.

Another central notion is ``possible types'': if a typing context
$\G$ corresponds to some store $s$, and $s$ assigns to variable $x$
the value $v$, then the possible types of the triple $(\G,x,v)$
include all types $T$ such that $\typDft x T$. Possible types
serve a similar purpose as our invertible typing rules, to
facilitate induction proofs. Unlike invertible typing, possible
types depend on the runtime value $v$ of $x$.
The possible types lemma relates general typing in a context
with a corresponding store to possible types. It serves
a similar purpose as our
\fullref{lemma:invertible} (which relates tight to invertible typing),
but its proof is more complicated, because it depends on sublemmas
that relate types to values in the context corresponding to the store,
and on general typing.

\citet{wadlerfest} also prove a similar result as \fullref{thm:one}:
the general to tight lemma states
that in a context $\G$ for which there exists some corresponding
runtime store $s$, general typing implies tight typing.
We prove \fullref{lemma:invertible} first,
which makes proving \fullref{thm:one} easy. The proof of~\citet{wadlerfest}
does the analogous steps in the opposite order: it proves
the general to tight lemma first, and the possible types lemma
afterwards, using the general to tight lemma in its proof.
The proof of the general to tight lemma is thus complicated because
it cannot make use of possible types.
Another complication is that the proof of the general to tight lemma,
like the proof of the possible types lemma, depends on sublemmas
that relate types to values in the context corresponding to the store.

\citet{oopsla16} define a variant of the DOT calculus with additional
features, most significantly subtyping between recursive types.
This adds significant complexity to the proof: Lemmas~6 to~11
are needed only because of this feature.
However, subtyping between recursive
types is not needed to model Scala. Scala has nominal subtyping between classes
and traits that are explicitly declared to be subtypes using an \texttt{extends}
clause. A class or trait declaration in Scala corresponds in DOT to a type member
declaration that gives a label $A$ to a recursive type. The recursive type is used
to define the members of the class, and the recursion is necessary so that members
of the class can refer to the object of the class \texttt{this}. However, if this
class $A$ is declared to be a subclass of a similarly declared superclass $A'$, the corresponding
DOT definitions generate the following subtyping relationships:
$A <: \mu A <: A <: A' <: \mu A' <: A'$, where $\mu A$ and $\mu A'$ represent the recursive
types that encode the sets of members of classes $A$ and $A'$, respectively. These definitions model the subtyping between classes $A$ and $A'$ without requiring a direct subtyping relationship between recursive types.

Unlike \citet{wadlerfest} and our proof, the proof of \citet{oopsla16} does not
use tight typing, the typing relation that neutralizes the two
type rules that enable a DOT program to introduce non-sensical subtyping relationships in a custom
type system. Instead, the proof uses ``precise subtyping'', a
restriction of general subtyping to relationships whose derivation does not
end in the transitivity rule.

\subsection{History of Scala Calculi}

\newcommand{\notsure}[1]{\textcolor{red}{#1}}

\citet{nuobj} introduce $\nu$Obj, a calculus to formalize Scala's path-dependent types. 
$\nu$Obj includes abstract type members, classes, compound (non-commutative) mixin
composition, and singleton types, among other features. 
However, the calculus lacks several 
essential Scala features, such as the ability to define custom lower bounds for type members, 
and has no top and bottom types. 
Additionally, $\nu$Obj, unlike Scala, has classes as first-class values.
$\nu$Obj comes with a type soundness proof. 
The paper also shows that type checking for $\nu$Obj is undecidable.
\citet{fs} propose Featherweight Scala, which is  similar to $\nu$Obj, but without classes as
first-class values.
The paper shows that type inference in Featherweight Scala is decidable, but does
not prove type safety. 
Scalina, introduced by \citet{scalina}, presents a formalization for higher-kinded types in Scala, but also
without a soundness proof. 

\citet{fool12} present the first DOT. 
DOT has fewer syntax-level features than $\nu$Obj: there are no classes, mixins, or inheritance. 
However, some of the previously missing crucial Scala features are now present. 
The calculus allows refinement of abstract type members through commutative intersections, 
combining nominal with structural typing. 
Type members can have custom lower and upper bounds, and the type system contains 
a bottom and top type. 
The paper comes without a type safety proof, but 
it explains the challenges and provides counterexamples to preservation. 
The paper shows how the environment narrowing property makes proving soundness 
complicated: replacing a type in the context with a more precise version can impose a
new subtyping relationship, which could disagree with the existing ones.

\citet{oopsla14} have the first mechanized soundness proof for $\mu$DOT, a simplified calculus that excludes refinements, intersections, and the 
bottom and top types, and uses big-step semantics.  
The paper proposes the idea to circumvent bad bounds by reasoning about types
that correspond to runtime values.

\citet{wadlerfest} and \citet{oopsla16} 
build on this store correspondence idea, to establish the first mechanized soundness
proofs for DOT calculi
with support for type intersection and refinement,
and top and bottom types. The two calculi and soundness proofs were discussed in the previous section.

\subsection{Other Related Calculi}

Path-dependent types were first introduced in the context of \textit{family polymorphism} by~\citet{family}.
In family polymorphism, groups of types can form \textit{families}
that correspond to a specific object.
Two types from the same class are considered incompatible if the types
are associated with different runtime objects.

Family polymorphism is the foundation of \textit{virtual classes}, which were introduced in the
Beta programming language~\citep{beta} and further developed in gbeta~\citep{gbeta}.
Virtual classes are nested classes that can be extended or redefined (overridden), and are dynamically
resolved through late binding.
Family polymorphism allows for a fine-grained distinction
between classes that have the same static path, yet belong to different runtime objects
and can thus have different implementations.

Virtual classes were first formalized and proved type safe in the \textit{vc} calculus~\citep{vc}.
vc is a class-based, nominally-typed calculus with a big-step semantics. 
To create path-based types, the keyword \textsf{out} is used to refer to an enclosing object.
With its support for classes, inheritance, and mutation of variables, vc is more complex than DOT,
whose purpose is to serve as a simple core calculus for Scala. Additionally,
Scala has no support for virtual classes: the language does not allow class overriding, and its
classes are resolved statically at compile time.

\textit{Tribe} by~\citet{tribe} is a simpler, more general calculus inspired by vc.
One of the main distinctions to vc is that variables, and not just enclosing objects (\textsf{out}),
can be used as paths for path-dependent types.
This makes the calculus more general, as it 
can express subtyping relationships between classes with arbitrary absolute paths.
Tribe comes with a type-safety proof, which is based on a small-step semantics.
Expanding paths to allow variables brings Tribe closer to DOT.
However, the complexity of the type system, resulting from modeling classes and inheritance,
and the modeling of virtual classes, which are not present in Scala,
leaves DOT more suitable as a core calculus for Scala.

\citet{popl17} offer a survey of mechanized soundness proofs for big-step, DOT-like calculi using definitional 
interpreters. 
The paper explores a family of calculi ranging from 
System F
to System D$_{<:>}$ and general proof techniques that can be applied to this entire
family. The paper discusses similarities and differences between System D$_{<:>}$
and DOT.

\section{Conclusion}\label{sec:conclusion}
DOT~\citep{wadlerfest} is the result of a long effort to develop a core calculus for Scala.
Now that there is a sound version of the calculus, we would like to extend it with other Scala features, such as
classes, mixin composition, side effects, implicit parameters, etc.
DOT can be also used as a platform for developing new language features 
and for fixing Scala's soundness issues~\citep{null}.
But these applications are hindered by the complexity of the existing soundness proofs,
which interleave reasoning about variables, types, and runtime values, and their complex interactions.


We have presented a simplified soundness proof for the DOT 
calculus, formalized in Coq.
The proof separates the reasoning about types, typing contexts, and values
from each other.
The proof depends on the insight of \good typing contexts, a syntactic characterization of
contexts that rule out any non-sensical subtyping that could be introduced
by abstract type members.
The central lemmas of the proof follow a general proof recipe for deducing properties of terms from their types
            in full DOT while reasoning only in a restricted, intuitive environment free from the
            paradoxes caused by abstract type members.
            The same recipe can be followed to prove similar lemmas when the calculus is modified or extended.
The result is a simple, modular proof that is well suited for developing extensions.


\bibliography{bibliography}

\section{Appendix}
\begin{wide-rules}
\textbf{Term typing}
\begin{multicols}{2}

\infrule[Var]
  {\G(x)=T}
  {\typDft x T}
  
\infrule[All-I]
  {\typ {\extendG x T} t U
    \andalso
    x\notin\fv T}
  {\typDft{\tLambda x T t}{\tForall x T U}}

\infrule[All-E]
  {\typDft x {\tForall z S T}
    \andalso
    \typDft y S}
  {\typDft {x\, y} {\tSubst z y T}}

\infrule[\{\}-I]
  {\typd {\extendG x T} d T}
  {\typDft {\tNew x T d} {\tRec x T}}
  
\infrule[\{\}-E]
  {\typDft x {\tFldDec a T}}
  {\typDft {x.a} T}

\infrule[Let]
  {\typDft t T
      \\
    \typ {\extendG x T} u U
    \andalso
    x\notin\fv U}
  {\typDft {\tLet x t u} U}

\infrule[Rec-I]
  {\typDft x T}
  {\typDft x {\tRec x T}}

\infrule[Rec-E]
  {\typDft x {\tRec z T}}
  {\typDft x {\tSubst z x T}}

\infrule[And-I]
  {\typDft x T
    \andalso
    \typDft x U}
  {\typDft x {\tAnd T U}}
\newrulefalse

\infrule[Sub]
  {\typDft t T
    \andalso
    \subDft T U}
  {\typDft t U}

\end{multicols}
  
\begin{multicols}{2}
  
\infrule[Def-Trm]
  {\typ {\extendG z T} t U}
  {\typdDft {\set{a=t}} {\tFldDec a U}}

\infax[Def-Typ]
  {\typdDft {\set{A=T}} {\tTypeDec A T T}}  

\infrule[AndDef-I]
  {\typdDft {d_1} {T_1}
    \andalso
    \typdDft {d_1} {T_2}
    \\
    \dom{d_1},\,\dom{d_2}\text{ disjoint}}
  {\typdDft {\tAnd {d_1} {d_2}} {\tAnd {T_1} {T_2}}}
\end{multicols}

\textbf{Subtyping rules}

\begin{multicols}{3}
    
\infax[Top]
  {\subDft T \top}

\infax[Bot]
  {\subDft \bot T}

\infax[Refl]
  {\subDft T T}
  
\infrule[Fld-$<:$-Fld]
  {\subDft T U}
  {\subDft {\tFldDec a T} {\tFldDec a U}}

\infrule[$<:$-And]
  {\subDft S T
    \andalso
    \subDft S U}
  {\subDft S {\tAnd T U}}

\infax[And$_1$-$<:$]
  {\subDft {\tAnd T U} T}

\infax[And$_2$-$<:$]
  {\subDft {\tAnd T U} U}

\infrule[$<:$-And]
  {\subDft S T
    \andalso
    \subDft S U}
  {\subDft S {\tAnd T U}}
  
\infrule[$<:$-Sel]
  {\typDft x {\tTypeDec A S T}}
  {\subDft S {x.A}}

\infrule[Sel-$<:$]
  {\typDft x {\tTypeDec A S T}}
  {\subDft {x.A} T}

\infrule[Trans]
  {\subDft S T
    \andalso
    \subDft T U}
  {\subDft S U}

\end{multicols}
\begin{multicols}{2}

\infrule[Typ-$<:$-Typ]
  {\subDft {S_2} {S_1}
    \\
    \subDft {T_1} {T_2}}
  {\subDft {\tTypeDec A {S_1} {T_1}} {\tTypeDec A {S_2} {T_2}}}

\infrule[All-$<:$-All]
  {\subDft {S_2} {S_1}
    \\
    \sub {\extendG x {S_2}} {T_1} {T_2}}
  {\subDft {\tForall x {S_1} {T_1}} {\tForall x {S_2} {T_2}}}
\end{multicols}

\caption{DOT Type Rules \citep{wadlerfest}. Definition type assignment rules are
    shown in Figure~\ref{fig:def-rules}.}
  \label{fig:typing}

\end{wide-rules}

\begin{proof}[Proof of \fullref{lemma:cf-lambda}] 
\begin{mathpar} 
    \inferrule*[right=Induction on $\vdash_!$]{ 
    \inferrule*[right=Narrowing]{ 
    \inferrule*[right=Induction on $\vdash_{\#\#}$]{ 
    \inferrule*[right=\fullref{lemma:invertible}]{ 
        \inferrule*[Right=\fullref{thm:one}]{ 
\inert \G \\ \typDft x {\tForall y T U} 
}{ 
\inert \G \\ \typTightDft x {\tForall y T U} 
}}{ 
\inert \G \\ \tptDft x {\tForall y T U} 
}}{ 
\inert \G \\ \typPrecDft x {\tForall y {T'}{U'}} \\ \subDft T {T'} \\ \sub{\extendG y {T'}}{U'}{U} 
}}{ 
\inert \G \\ \typPrecDft x {\tForall y {T'}{U'}} \\ \subDft T {T'} \\ \sub{\extendG y {T}}{U'}{U} 
}}{ 
\inert \G \\ \G(x) = {\tForall y {T'}{U'}} \\ \subDft T {T'} \\ \sub{\extendG y {T}}{U'}{U} 
} 
\end{mathpar} 
\end{proof}

\begin{proof}[Proof of \fullref{lemma:cf-field}]
\begin{mathpar}
    \inferrule*[right=Induction on $\vdash_!$]{
    \inferrule*[right=Induction on $\vdash_{\#\#}$]{
    \inferrule*[right=\fullref{lemma:invertible}]{
        \inferrule*[Right=\fullref{thm:one}]{
\inert \G \\ \typDft x {\tFldDec a T}
}{
\inert \G \\ \typTightDft x {\tFldDec a T}
}}{
\inert \G \\ \tptDft x {\tFldDec a T}
}}{
\inert \G \\ \typPrecDft x {\tFldDec a {T'}} \\ \subDft{T'}{T}
}}{
\inert \G \\ \G(x) = {\tHas x {\tFldDec a {T'}}} \\ \subDft{T'}{T}
}
\end{mathpar}
\end{proof}

\begin{proof}[Proof of \fullref{lemma:cf-field-v}]
\begin{mathpar}
    \inferrule*[right=Inversion of \rn{\{\}-I!}]{
    \inferrule*[right=Induction on $\vdash_{\#\#}$]{
    \inferrule*[right=\fullref{lemma:invertible}]{
        \inferrule*[Right=\fullref{thm:one}]{
\inert \G \\ \typDft v {\tHas x {\tFldDec a T}} 
}{
\inert \G \\ \typTightDft v {\tHas x {\tFldDec a T}} 
}}{
\inert \G \\ \tptDft v {\tHas x {\tFldDec a T}} 
}}{
\inert \G \\ \typPrecDft v {\tHas x {\tFldDec a T}} 
}}{
\inert \G \\ v = \tValHas x {\tFldDec a T} {\set{a=t}} 
\\ \typDft t T
}
\end{mathpar}
\end{proof}


\end{document}